\pgfplotsset{compat=1.18}
\newcommand{\ourslicer}{\mathcal{S}}
\newcommand{\ourabstractor}{\mathcal{A}}
\begin{document}

\title{Decentralized Predicate Detection over Partially Synchronous Continuous-Time Signals\thanks{Supported by NSF awards 2118179 and 2118356.}}
\titlerunning{Decentralized Continuous-Time Predicate Detection}

\def\orcidID#1{\smash{\href{http://orcid.org/#1}{\protect\raisebox{-1.25pt}{\protect\includegraphics{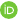}}}}}

\author{Charles Koll\inst{1}\orcidID{0000-0001-5941-250X} \and
Anik Momtaz\inst{2}\orcidID{0000-0002-4739-1032} \and
Borzoo Bonakdarpour\inst{2}\orcidID{0000-0003-1800-5419} \and
Houssam Abbas\inst{1,3}\orcidID{0000-0002-8096-2618}}
\authorrunning{C. Koll et al.}
\institute{Oregon State University, USA
\email{\{kollch,houssam.abbas\}@oregonstate.edu} \and
Michigan State University, USA
\email{\{momtazan,borzoo\}@msu.edu} \and
Corresponding author.}
\maketitle
\begin{abstract}
We present the first decentralized algorithm for detecting predicates over continuous-time signals under partial synchrony.
A distributed cyber-physical system (CPS) consists of a network of agents, each of which measures (or computes) a continuous-time signal.
Examples include distributed industrial controllers connected over wireless networks and connected vehicles in traffic.
The safety requirements of such CPS, expressed as logical predicates, must be monitored at runtime.
This monitoring faces three challenges: first, every agent only knows its own signal, whereas the safety requirement is global and carries over multiple signals.
Second, the agents' local clocks drift from each other, so they do not even agree on the time. Thus, it is not clear which signal values are actually synchronous to evaluate the safety predicate.
Third, CPS signals are continuous-time so there are potentially uncountably many safety violations to be reported.
In this paper, we present the first decentralized algorithm for detecting conjunctive predicates in this setup.
Our algorithm returns all possible violations of the predicate, which is important for eliminating bugs from distributed systems regardless of actual clock drift.
We prove that this detection algorithm is in the same complexity class as the detector for discrete systems.
We implement our detector and validate it experimentally. 
\keywords{Predicate detection  \and Distributed systems \and Partial synchrony \and Cyber-physical systems.}
\end{abstract}

\section{Introduction: Detecting All Errors in Distributed CPS}
\label{sec:intro}

This paper studies the problem of detecting all property violations in a distributed cyber-physical systems (CPS).
A \emph{distributed CPS} consists of a network of communicating agents.
Together, the agents must accomplish a common task and preserve certain properties.
For example, a network of actuators in an industrial control system must maintain a set point, or a swarm of drones must maintain a certain geometric formation.
In these examples, we have $N$ agents generating $N$ continuous-time and real-valued signals $x_n, 1 \leq n \leq N$, and a \emph{global property} of all these signals must be maintained, such as the property $(x_1>0) \land\dots (x_N>0)$.
At runtime, an algorithm continuously monitors whether the property holds.

These systems share the following characteristics:
first, CPS signals are \emph{analog} (continuous-time, real-valued), and the global properties are continuous-time properties.
From a distributed computing perspective, this means that every moment in continuous-time is an event, yielding uncountably many events.
Existing reasoning techniques from the discrete time settings, by contrast, depend on there being at the most countably many events.

Second, each agent in these CPS has a \emph{local clock} that drifts from other agents' clocks: so if agent 1 reports $x_1(3)=5$ and agent 2 reports that $x_2(3)=-10$, these are actually not necessarily synchronous measurements.
So we must re-define property satisfaction to account for unknown drift between clocks.
For example, if the local clocks drift by at most 1 second, then the monitor must actually check whether any value combination of $x_1(t), x_2(s)$ violates the global property, with $|t-s|\leq 1$.
We want to identify any scenario where the system execution \emph{possibly} \cite{cooper1991consistent} violates the global property; the actual unknown execution may or may not do so.

Clock drift raises a third issue: the designers of distributed systems want to know \emph{all the ways} in which an error state could occur.
E.g., suppose again that the clock drift is at most $1$, and the designer observes that the values $(x_1(1), x_2(1.1))$ violate the specification, and eliminates this bug.
But when she reruns the system, the actual drift is 0.15 and the values $(x_1(1), x_2(1.15))$ also violate the spec.
Therefore all errors, resulting from all possible clock drifts within the bound, must be returned to the designers.
This way the designers can guarantee the absence of failures regardless of the actual drift amount.
When the error state is captured in a predicate, this means that all possible satisfactions of the predicate must be returned.
This is known as the \emph{predicate detection} problem.
We distinguish it from predicate monitoring, which requires finding only one such satisfaction, not all.

Finally, in a distributed system, a central monitor which receives all signals is a single point of failure: if the monitor fails, predicate detection fails. Therefore, ideally, the detection would happen in decentralized fashion.

In this work, we solve the problem of decentralized predicate detection for distributed CPS with drifting clocks under partial synchrony.

\paragraph{Related Work}
There is a rich literature dealing with decentralized predicate detection in \emph{the discrete-time setting}: e.g.
\cite{chauhan2013distributed} detects regular discrete-time predicates,
while \cite{garg2020predicate} detects lattice-linear predicates over discrete states,
and \cite{sen2004detecting} performs detection on a regular subset of Computation Tree Logic (CTL).
We refer the reader to the books by Garg \cite{gargbook} and Singhal \cite{singhalbook} for more references.
By contrast, we are concerned with \emph{continuous-time} signals, which have uncountably many events and necessitate new techniques.
For instance, one cannot directly iterate through events as done in the discrete setting.

The recent works \cite{momtaz2021predicate,momtaz23iccps} do \emph{monitoring} of temporal formulas over partially synchronous analog distributed systems -- i.e., they only find one satisfaction, not all.
Moreover, their solution is centralized.

More generally, one finds much work on monitoring temporal logic properties, especially Linear Temporal Logic (LTL) and Metric Temporal Logic (MTL), but they either do monitoring or work in discrete-time.
Notably, \cite{basin2015failure} used a three-valued MTL for monitoring in the presence of failures and non-FIFO communication channels.
\cite{bauer2016decentralised} monitors satisfaction of an LTL formula.
\cite{mostafa2015decentralized} considered a three-valued LTL for distributed systems with asynchronous properties.
\cite{bataineh2019efficient} addressed the problem with a tableau technique for three-valued LTL.
Finally, \cite{sen2004efficient} considered a past-time distributed temporal logic which emphasizes distributed properties over time.

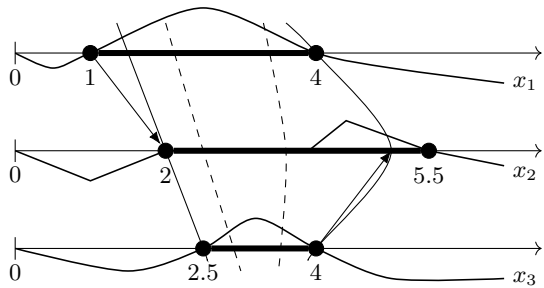
\begin{figure}[t]
	\centering
	\begin{tikzpicture}
[event/.style={circle,draw,fill=white,inner sep=0pt,minimum size=2mm}]
\foreach \n in {1,...,3} {
    \coordinate (process \n) at (1,-1.3*\n);
    \draw[->] (process \n) -- +(7,0);
}

\node[event,fill=black,label=below:$1$] (left root 1) at (2,0 |- process 1) {};
\node[event,fill=black,label=below:$4$] (right root 1) at (5,0 |- process 1) {};
\draw[line width=.08cm] (left root 1) -- (right root 1);
\node[event,fill=black,label=below:$2$] (left root 2) at (3,0 |- process 2) {};
\node[event,fill=black,label=below:$5.5$] (right root 2) at (6.5,0 |- process 2) {};
\draw[line width=.08cm] (left root 2) -- (right root 2);
\node[event,fill=black,label=below:$2.5$] (left root 3) at (3.5,0 |- process 3) {};
\node[event,fill=black,label=below:$4$] (right root 3) at (5,0 |- process 3) {};
\draw[line width=.08cm] (left root 3) -- (right root 3);

\draw (process 1) +(0,0.15) -- +(0,-0.15);
\node at (1,-1.4) [anchor=north] {0};
\draw (process 2) +(0,0.15) -- +(0,-0.15);
\node at (1,-2.7) [anchor=north] {0};
\draw (process 3) +(0,0.15) -- +(0,-0.15);
\node at (1,-4) [anchor=north] {0};

\draw plot [smooth] coordinates {(2.35,-0.9) (2.5,-1.3) (left root 2) (left root 3) (3.55,-4.05)};
\draw plot [smooth] coordinates {(4.6,-0.9) (right root 1) (6,-2.6) (right root 3) (4.9,-4.05)};
\draw[dashed] plot [smooth] coordinates {(3,-0.9) (3.5,-2.6) (4,-4.2)};
\draw[dashed] plot [smooth] coordinates {(4.3,-0.9) (4.6,-2.6) (4.5,-4.2)};

\draw[-{Latex}] (left root 1) -- (left root 2);
\draw[-{Latex}] (right root 3) -- (6, -2.6);

\datavisualization[xy Cartesian, visualize as smooth line]
    data {
        x, y
        1, -1.3
        1.5, -1.5
        2, -1.3
        3.5, -0.7
        5, -1.3
        6, -1.5
        7.5, -1.7
    };
\node at (7.5,-1.7) [anchor=west] {$x_1$};
\datavisualization[xy Cartesian, visualize as line]
    data {
        x, y
        1, -2.6
        2, -3
        3, -2.6
        4.9, -2.6
        5.4, -2.2
        6.5, -2.6
        7.5, -2.8
    };
\node at (7.5,-2.9) [anchor=west] {$x_2$};
\datavisualization[xy Cartesian, visualize as smooth line]
    data {
        x, y
        1, -3.9
        2.5, -4.2
        3.5, -3.9
        4.2, -3.5
        5, -3.9
        6, -4.3
        7.5, -4.3
    };
\node at (7.5,-4.3) [anchor=west] {$x_3$};
\end{tikzpicture}
	\caption{An example of a continuous-time distributed signal with 3 agents. Three timelines are shown, one per agent. The signals $x_n$ are also shown, and the local time intervals over which they are non-negative are solid black. The skew $\cskew$ is 1. The Happened-before relation is illustrated with solid arrows, e.g. between $\evnt{1}{1} \hb \evnt{2}{2}$, and $\evnt{3}{4} \hb \evnt{2}{5}$. These are not message transmission events, rather they follow from \autoref{def:dstr signal}. Some satisfying cuts for the predicate $\phi = (x_1 \geq 0) \land (x_2 \geq 0) \land (x_3 \geq 0)$ are shown as dashed arcs, and the extremal cuts as solid arcs. All extremal cuts contain root events, and leftmost cut $A$ also contains non-root events.}
	\label{fig:continuous satcut}
\end{figure}

\paragraph{Illustrative Example.}
It is helpful to overview our algorithm and key notions via an example before delving into the technical details. An example is shown in \autoref{fig:continuous satcut}. Three agents produce three signals $x_1,x_2,x_3$.
The decentralized detector consists of three local detectors $D_1,D_2,D_3$, one on each agent. Each $x_n$ is observed by the corresponding $D_n$.
The predicate $\phi = (x_1 \geq 0) \land (x_2 \geq 0) \land (x_3 \geq 0)$ is being detected. It is possibly true over the intervals shown with solid black bars; their endpoints are measured on the local clocks.
The detector only knows that the maximal clock skew is $\cskew=1$, but not the actual value, which might be time-varying.

Because of clock skew, any two local times within $\cskew$ of each other must be considered as potentially concurrent, i.e. they might be measured at a truly synchronous moment.
For example, the triple of local times $[4,4.5,3.6]$ might have been measured at the global time $4$, in which case the true skews were 0, 0.5, and -0.4 respectively.
Such a triple is (loosely speaking) called a \emph{consistent cut} (\autoref{def:ccut}).
The detector's task is to find all consistent cuts that satisfy the predicate.
In continuous time, there can be uncountably many, as in \autoref{fig:continuous satcut}; the dashed lines show two satisfying consistent cuts, or satcuts for short.

In this example, our detector outputs two satcuts, $[1.5, 2, 2.5]$ and $[4, 5, 4]$, shown as thin solid lines.
These two have the special property (shown in this paper) that every satcut lies between them, and every cut between them is a satcut.
For this reason we call them \emph{extremal satcuts} (\autoref{def:leftmost cut}).
Thus these two satcuts are a finite representation of the uncountable set of satcuts, and encode all the ways in which the predicate might be satisfied.

We note three further things: the extremal satcuts are not just the endpoints of the intervals, and simply inflating each interval by $\cskew$ and intersecting them does not yield the satcuts.
Each local detector must somehow learn of the relevant events (and only those) on other agents, to determine whether they constitute extremal satcuts.

\paragraph{Contributions}
In this paper, we present the first \emph{decentralized predicate detector for distributed CPS}, thus enhancing the rigor of distributed CPS design.
\begin{itemize}
	\item Our solution is fully decentralized: each agent only ever accesses its own signal, and exchanges a limited amount of information with the other agents.
	\item It is an online algorithm, running simultaneously with the agents' tasks.
	\item It applies to an important class of global properties that are conjunctions of local propositions.
	\item We introduce a new notion of clock, the \emph{physical vector clock}, which might be of independent interest. A physical vector clock orders continuous-time events in a distributed computation without a shared clock.
	\item Our algorithm can be deployed on top of existing infrastructure. Specifically, our algorithm includes a modified version of the classical detector of \cite{chauhan2013distributed}, and so can be deployed on top of existing infrastructure which already supports that detector.
\end{itemize}

\paragraph{Organization}
In \autoref{sec:prelims}, we give necessary definitions and define the problem.
In \autoref{sec:satcuts} we establish fundamental properties of the uncountable set of events $S_E$ satisfying the predicate.
Our detector is made of two processes: a decentralized abstractor presented in \autoref{sec:abstractor}, and a decentralized slicer presented in \autoref{sec:ourslicer}.
Together, they compute a finite representation of the uncountable $S_E$.
The complexity of the algorithm is also analyzed in \autoref{sec:ourslicer}.
\autoref{sec:experiments} demonstrates an implementation of the detector, and \autoref{sec:conclusion} concludes.
All proofs are in the Appendix.

\section{Preliminaries and Problem Definition}
\label{sec:prelims}

We first set some notation. The set of reals is $\reals$, the set of non-negative reals is $\nnreals$.
The integer set $\{1,\dots,N\}$ is abbreviated as $[N]$. \emph{Global} time values
(kept by an \emph{imaginary} global clock) are denoted by $\gclk$, $\gclk'$, $\gclk_1$, $\gclk_2$, etc, while the symbols $t$, $t'$, $t_1$, $t_2$, $s,s',s_1,s_2$, etc. denote \emph{local} clock values specific to given agents which will always be clear from the context.
A \emph{lattice} is a set $S$ equipped with a partial order relation $\sqsubseteq$ s.t. every 2 elements have a supremum, or \emph{join}, and an infimum, or \emph{meet}.
An \emph{increasing} function $f$ is one s.t. $t<t' \implies f(t)<f(t')$.
Notation $(x_n)_n$ indicates a sequence $(x_1,\ldots,x_N)$ where $N$ is always clear from context.

\subsection{The Continuous-Time Setup}
\label{sec:cont setup}
This section defines the setup of this study. It generalizes the classical discrete-time setup, and follows closely the setup in \cite{momtaz2021predicate}.
We assume a loosely coupled system with asynchronous message passing between agent monitors.
Specifically, the system consists of $N$ reliable \emph{agents} that do not fail, denoted by $\{\agent_1, \agent_2, \dots, \agent_N\}$, without any shared memory or global clock.
The output signal of agent $\agent_n$ is denoted by $x_n$, for $1 \leq n \leq N$.
Agents can communicate via FIFO lossless channels.
There are no bounds on message transmission times.

In the discrete-time setting, an event is a value change in an agent's variables.
The following definition generalizes this to the continuous-time setting.

\begin{definition}[Output signal and events]
	\label{def:cont signal}
	An \emph{output signal} (of some agent) is a function $x: \nnreals \rightarrow \reals$, which is right-continuous (i.e., $\lim_{s \rightarrow t^+} x(s) = x(t)$ at every $t$), and left-limited (i.e., $\lim_{s \rightarrow t^-} |x(s)| < \infty$ for all $t$).

	In an agent $\agent_n$, an \emph{event} is a pair $(t,x_n(t))$, where $t$ is the \emph{local} time kept by the agent's local clock.
	This will often be abbreviated as $\evnt{n}{t}$ to follow standard notation from the discrete-time literature.
\end{definition}

Note that an output signal can contain discontinuities.

\begin{definition}[Left and right roots]
	\label{def:roots}
	A \emph{root} is an event $\evnt{n}{t}$ where $x_n(t) = 0$ or a discontinuity at which the signal changes sign: $\text{sgn}( x_n(t)) \neq \text{sgn}(\lim_{s \rightarrow t^-} x_n(s))$.
	A \emph{left root} $\evnt{n}{t}$ is a root preceded by negative values: there exists a positive real $\delta$ s.t. $x_n(t - \alpha) < 0$ for all $0 < \alpha \leq \delta$.
	A \emph{right root} $\evnt{n}{t}$ is a root followed by negative values: $x_n(t + \alpha) < 0$ for all $0 < \alpha \leq \delta$.
\end{definition}
In \autoref{fig:continuous satcut}, the only left root of $x_2$ is $\evnt{2}{2} = (2,x_2(2))=(2,0)$.
The single right root of $x_2$ is $\evnt{2}{5.5}$.
Notice that intervals where the signal is identically 0 are allowed, as in $x_2$.

We will need to refer to a global clock which acts as a `real' time-keeper.
This global clock is a theoretical object used in definitions and theorems, and is \emph{not} available to the agents.
We make these assumptions:
\begin{assumption}
	\label{ass:basic}
	\begin{enumerate}[label=(\alph*)]
	\item (Partial synchrony) The local clock of an agent $\agent_n$ is an increasing function $c_n: \nnreals \rightarrow \nnreals$, where $c_n(\gclk)$ is the value of the local clock at global time $\gclk$.
	For any two agents $\agent_n$ and $\agent_m$, we have:
	$$\forall \gclk \in \nnreals: |c_n(\gclk)-c_{m}(\gclk)| < \cskew$$
	with $\cskew > 0$ being the maximum \emph{clock skew}.
	The value $\cskew$ is known by the detector in the rest of this paper.
	In the sequel, we make it explicit when we refer to `local' or `global' time.\label{ass:partial synchrony}
	\item (Starvation-freedom and non-Zeno) Every signal $x_n$ has infinitely many roots in $\nnreals$, with a finite number of them occurring in any bounded interval.\label{ass:starvation freedom}
	\end{enumerate}
\end{assumption}

\begin{remark}
Our detection algorithm can trivially handle multi-dimensional output signals $x_n$.
We skip this generalization for clarity of exposition.
\end{remark}
\begin{remark}
\noindent In distributed systems, agents typically exchange messages as part of normal operation. These messages help establish an ordering between events (a Send occurs before the corresponding Receive).
This extra order information can be incorporated in our detection algorithm with extra bookkeeping.
\\
We do \emph{not} assume that the clock drift is constant -- it can vary with time. It is assumed to be uniformly bounded by $\cskew$, which can be achieved by using a clock synchronization algorithm, like NTP~\cite{ntp}.
\end{remark}

A distributed signal is modeled as a set of events partially ordered by Lamport's \emph{happened-before} relation~\cite{lamport1978time}, adapted to the continuous-time setting.

\begin{definition}[Analog Distributed signal]
	\label{def:dstr signal}
	A \emph{distributed signal} on $N$ agents is a tuple $(E,\hb)$, in which $E$ is a set of events
	$$E = \{\evnt{n}{t} \such n \in [N] ,  ~t\in \nnreals\}$$
	such that for all $t \in \nnreals, n \in [N]$, there exists an event $\evnt{n}{t}$ in $E$, and $t$ is local time in agent $\agent_n$.
	The \emph{happened-before} relation $\hb \subseteq E\times E $ between events is such that:
	\begin{enumerate}[label=(\arabic*)]
		\item\label{def:hb local} In every agent $\agent_n$, all events are totally ordered, that is,
		$$\forall t, t' \in \nnreals: (t < t') \implies (\evnt{n}{t} \hb \evnt{n}{t'}).$$
		\item\label{def:hb eps} For any two events $\evnt{n}{t}, \evnt{m}{t'} \in E$, if $t +\cskew \leq t'$, then $\evnt{n}{t} \hb \evnt{m}{t'}$.
		\item\label{def:hb transitive} If $e \hb f$ and $f \hb g$, then $e \hb g$.
	\end{enumerate}
	We denote $E[n]$ the subset of events that occur on $A_n$, i.e. $E[n] \defeq \{\evnt{n}{t} \in E\}$.
\end{definition}

The happened-before relation, $\hb$, captures what can be known about event ordering in the absence of perfect synchrony.
Namely, events on the same agent can be linearly ordered, and at least an $\cskew$ of time must elapse between events on different agents for us to say that one happened before the other.
Events from different agents closer than an $\cskew$ apart are said to be \emph{concurrent}.

\paragraph{Conjunctive Predicates}
This paper focuses on specifications expressible as \emph{conjunctive predicates} $\formula$, which are conjunctions of $N$ linear inequalities.
\begin{equation}
\label{eq:conjunctive predicate}
	\formula \defeq (x_1\geq 0)\land (x_2 \geq 0) \land \ldots \land (x_N \geq 0).
\end{equation}
These predicates model the simultaneous co-occurrence, in global time, of events of interest, like `all drones are dangerously close to each other'.
Eq. \eqref{eq:conjunctive predicate} also captures the cases where some conjuncts are of the form $x_n\leq 0$ and $x_n=0$.
If $N$ numbers $(a_n)$ satisfy predicate $\formula$ (i.e., are all non-negative), we write this as $(a_1,\ldots, a_N)\models \formula$.
Henceforth, we say `predicate' to mean a conjunctive predicate.
Note that the restriction to linear inequalities does not significantly limit our ability to model specifications.
If an agent $n$ has some signal $x_n$ with which we want to check $f(x_n) \geq 0$ for some arbitrary function $f$, then the agent can generate an auxiliary signal $y_n := f(x_n)$ so that we can consider the linear inequality $y_n \geq 0$.

What does it mean to say that a distributed signal satisfies $\formula$? And at what moment in time?
In the ideal case of perfect synchrony ($\cskew=0$) we'd simply say that $E$ satisfies $\formula$ at $\chi$ whenever $(x_1(\chi),\ldots,x_N(\chi))\models \formula$.
We call such a synchronous tuple $(x_n(\chi))_{n}$ a \emph{global state}.
But because the agents are only synchronized to within an $\cskew>0$, it is not possible to guarantee evaluation of the predicate at true global states.
The conservative thing is to treat concurrent events, whose local times differ by less than $\cskew$, as being simultaneous on the global clock.
E.g., if $N=2$ and $\cskew=1$ then $(x_1(1),x_2(1.5))$ is treated as a possible global state.
The notion of consistent cut, adopted from discrete-time distributed systems~\cite{garg01slicing}, formalizes this intuition.

\begin{definition}[Consistent Cut]
	\label{def:ccut}
	Given a distributed signal $(E, \hb)$, a subset of events $C \subset E$ is said to form a \emph{consistent cut} if and only if when $C$ contains an event $e$, then it contains all events that happened-before $e$.
	Formally,
	\begin{equation}
		\label{eq:cc hb def}
		\forall e \in E: (e \in C) \, \wedge \, (f \hb e) \implies f \in C.
	\end{equation}
	We write $C[n]$ for the cut's local events produced on $A_n$, and $C^\tau[n] \defeq \{t\such \evnt{n}{t}\in C[n]\}$ for the timestamps of a cut's local events.
\end{definition}
\noindent From this and \autoref{def:dstr signal}~\ref{def:hb transitive} it follows that if $\evnt{m}{t'}$ is in $C$, then $C$ also contains every event $\evnt{n}{t}$ such that $t+\cskew \leq t'$.
Thus to avoid trivialities, we may assume that $C$ contains at least one event from every agent.

A consistent cut $C$ is represented by its \emph{frontier} $\front(C) = \left(\evnt{1}{t_1},\ldots, \evnt{N}{t_N}\right)$, in which each $\evnt{n}{t_n}$ is the last event of agent $\agent_n$ appearing in $C$.
Formally:
$$\forall n \in [N],~ t_n \defeq \sup C^\tau[n] = \sup \{t \in \nnreals \such ~\evnt{n}{t} \in C[n]\}.$$

Henceforth, we simply say `cut' to mean a consistent cut, and we denote a frontier by $(\evnt{n}{t_n})_n$.
We highlight some easy yet important consequences of the definition: on a given agent $A_n$, $\evnt{n}{t} \in C$ for all $t < t_n$, so the timestamps of the cut's local events, $C^\tau[n]$, form a left-closed interval of the form $[0,a]$, $[0,a)$ or $[0,\infty)$.
Moreover, either $C^\tau[n]=[0,\infty)$ for all $n$, in which case $C=E$, or every $C^\tau[n]$ is bounded, in which case every $t_n$ is finite and $|t_n-t_m|\leq \cskew$ for all $n,m$.
\emph{Thus the frontier of a cut is a possible global state}.
This then justifies the following definition of distributed satisfaction.

\begin{definition}[Distributed Satisfaction; $S_E$]
	\label{def:satcut}
	Given a predicate $\formula$, a distributed signal $(E,\hb)$ over $N$ agents, and a consistent cut $C$ of $E$ with frontier
	$$\front(C) = \Big(\,(t_1,x_1(t_1)),\ldots,(t_N,x_N(t_N))\,\Big)$$
	we say that $C$ \emph{satisfies} $\formula$ iff $\big(x_1({t_1}), x_2(t_2),\ldots, x_N(t_N)\big)\models \formula$.
	We write this as $C \models \formula$, and say that $C$ is a \emph{satcut}.
	The set of all satcuts in $E$ is written $S_E$.
\end{definition}

\subsection{Problem Definition: Decentralized Predicate Detection}
The detector seeks to find \emph{all} possible global states that satisfy a given predicate, i.e. all satcuts in $S_E$. In general, $S_E$ is uncountable.

\textbf{Architecture.}
The system consists of $N$ agents with partially synchronous clocks with drift bounded by a known $\cskew$, generating a continuous-time distributed signal $(E,\hb)$.
Agents communicate in a FIFO manner, where messages sent from an agent $A_1$ to an agent $A_2$ are received in the order that they were sent.

\textbf{Problem statement.}
Given $(E,\hb)$ and a conjunctive predicate $\formula$, find a decentralized detection algorithm that computes a finite representation of $S_E$.
The detector is decentralized, meaning that it consists of $N$ local detectors, one on each agent, with access only to the local signal $x_n$ (measured against the local clock), and to messages received from other agents' detectors.

By computing a representation of all of $S_E$ (and not some subset), we account for asynchrony and the unknown orderings of events within $\cskew$ of each other.
One might be tempted to propose something like the following algorithm: detect all roots on all agents, then see if any $N$ of them are within $\cskew$ of each other.
This quickly runs into difficulties: first, a satisfying cut is not necessarily made up of roots; some or all of its events can be interior to the intervals where $x_n$'s are positive (see \autoref{fig:satcut lattice}).
Second, the relation between roots and satcuts must be established: it is not clear, for example, whether even satcuts made of only roots are enough to characterize all satcuts (it turns out, they're not).
Third, we must carefully control how much information is shared between agents, to avoid the detector degenerating into a centralized solution where everyone shares everything with everyone else.

\section{The Structure of Satisfying Cuts}
\label{sec:satcuts}

We establish fundamental properties of satcuts. In the rest of this paper we exclude the trivial case $C=E$.
Proposition \ref{thm:cont lattice} mirrors a discrete-time result~\cite{chauhan2013distributed}.

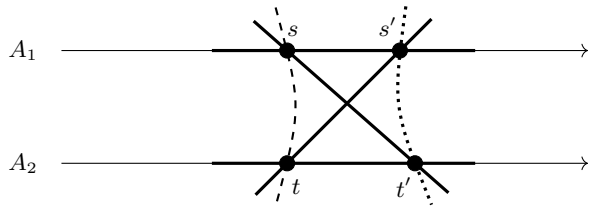
\begin{figure}[t]
	\centering
	\begin{tikzpicture}
[event/.style={circle,draw,fill=black,inner sep=0pt,minimum size=2mm}]
\coordinate (process 1) at (0,1.5);
\draw[->] (process 1) -- +(7,0);
\node at (-0.2,1.5) [anchor=east] {$A_1$};
\coordinate (process 2) at (0,0);
\draw[->] (process 2) -- +(7,0);
\node at (-0.2,0) [anchor=east] {$A_2$};

\node[event,label={[above]45:$s$}] (s) at (3,0 |- process 1) {};
\node[event,label={[above left]45:$s'$}] (s') at (4.5,0 |- process 1) {};

\node[event,label={[below right]-120:$t$}] (t) at (3,0 |- process 2) {};
\node[event,label={[below left]-45:$t'$}] (t') at (4.7,0 |- process 2) {};

\draw[very thick,shorten >=-.7cm,shorten <=-.7cm] (s) -- (t');
\draw[very thick,shorten >=-.7cm,shorten <=-.7cm] (t) -- (s');

\draw[very thick] (2,0 |- process 1) -- (5.5,0 |- process 1);
\draw[very thick] (2,0 |- process 2) -- (5.5,0 |- process 2);

\draw[dashed,thick,shorten >=-.7cm,shorten <=-.7cm] (s) .. controls (3.2,0.75) .. (t);
\draw[dotted,very thick,shorten >=-.7cm,shorten <=-.7cm] (s') .. controls (4.4,0.75) .. (t');
\end{tikzpicture}
	\caption{Two satcuts for a pair of agents $A_1$ and $A_2$, shown by the solid lines $(s, t')$ and $(s', t)$. Their intersection is $(s, t)$, shown by a dashed arc, and their union is $(s', t')$, shown by a dotted arc. For a conjunctive predicate $\formula$, the intersection and union are also satcuts, forming a lattice of satcuts.}
	\label{fig:satcut lattice}
\end{figure}

\begin{proposition}
	\label{thm:cont lattice}
	The set of satcuts for a conjunctive predicate is a lattice where the join and meet are the union and intersection operations, respectively.
\end{proposition}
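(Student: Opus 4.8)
The plan is to order $S_E$ by set inclusion $\subseteq$, viewing each cut as a subset of $E$, and then to show that $S_E$ is closed under $\cup$ and $\cap$, with $C_1 \cup C_2$ serving as the join and $C_1 \cap C_2$ as the meet of any two satcuts $C_1, C_2$. The collection of \emph{all} consistent cuts is already a lattice under these operations: a consistent cut is precisely a downward-closed set for $\hb$ (\autoref{def:ccut}), and downward-closedness is preserved by unions and intersections, so $C_1 \cup C_2$ and $C_1 \cap C_2$ are again consistent cuts. When $C_1, C_2 \neq E$ their frontiers are finite, whence the union and intersection also have finite frontiers and each still contains the earliest event of every agent; so neither equals $E$, and (as established just before \autoref{def:satcut}) each has a bona fide frontier that is a possible global state with all timestamps within $\cskew$. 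The entire burden is therefore to show that these two cuts \emph{satisfy} $\formula$.

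The key step is to compute their frontiers. Writing $\front(C_1) = (\evnt{n}{t_n})_n$ and $\front(C_2) = (\evnt{n}{s_n})_n$, I would use that the timestamp sets on each agent are left-closed intervals, so
\[
  (C_1 \cup C_2)^\tau[n] = C_1^\tau[n] \cup C_2^\tau[n],
  \qquad
  (C_1 \cap C_2)^\tau[n] = C_1^\tau[n] \cap C_2^\tau[n],
\]
and take suprema agent-by-agent. This yields frontier timestamps $\max(t_n,s_n)$ for the union and $\min(t_n,s_n)$ for the intersection; equivalently, the union's frontier event on agent $n$ is the later of $\evnt{n}{t_n}, \evnt{n}{s_n}$ and the intersection's is the earlier.

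Now I would invoke the conjunctive structure of $\formula$, which is exactly where the argument uses more than an arbitrary predicate. Since $\formula$ splits into one inequality $x_n \geq 0$ per agent, and the frontier value of $C_1 \cup C_2$ on agent $n$ equals either $x_n(t_n)$ or $x_n(s_n)$, that value is non-negative because whichever of $C_1, C_2$ contributes it is a satcut. As this holds for every $n$ independently, the whole conjunction holds and $C_1 \cup C_2 \models \formula$; replacing $\max$ by $\min$ gives $C_1 \cap C_2 \models \formula$ verbatim. Finally, $C_1 \cup C_2$ (resp. $C_1 \cap C_2$) is the least upper bound (resp. greatest lower bound) of $C_1, C_2$ in the powerset lattice, and now lies in $S_E$, so it is their join (resp. meet) within $S_E$; hence $S_E$ is a lattice.

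The only place the continuous-time setting genuinely intrudes is the frontier computation above: I must justify $\sup(A \cup B) = \max(\sup A, \sup B)$ and $\sup(A \cap B) = \min(\sup A, \sup B)$ for left-closed timestamp intervals $A,B$, treat the open/closed/half-open endpoint cases uniformly, and check that evaluating the signal at the supremum (rather than at an event literally in the cut) matches \autoref{def:satcut}. I do not expect this to be a real obstacle: the conjunctive decomposition reduces each conjunct to one of the two input satcuts, and that reduction is indifferent to the endpoint technicalities.
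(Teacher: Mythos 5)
Your proposal is correct and follows essentially the same route as the paper: both arguments observe that the frontier event of $C_1 \cap C_2$ (resp.\ $C_1 \cup C_2$) on each agent is the frontier event of one of the two satcuts, so each conjunct $x_n \geq 0$ is inherited, and conjunctivity does the rest. Your write-up is somewhat more explicit about consistency being preserved under $\cup$ and $\cap$ and about the supremum/endpoint technicalities, which the paper's proof glosses over, but the key idea is identical.
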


We show that the set of satcuts is characterized by special elements, which we call the leftmost and rightmost cuts.
\begin{definition}[Extremal cuts]
	\label{def:leftmost cut}
	Let $S_E$ be the \emph{set} of all satcuts in a given distributed signal $(E,\hb)$.
	For an arbitrary $C \in S_E$ with frontier $(\evnt{n}{t_n})_n$ and positive real $\alpha$, define $C-\alpha$ to be the set of cuts whose frontiers are given by
	$$(\evnt{1}{t_1 - \delta_1}, \evnt{2}{t_2 - \delta_2}, \ldots, \evnt{N}{t_N - \delta_N}) \textrm{  s.t. for all }~n:0\leq \delta_n\leq \alpha \textrm{ and for some } n.~\delta_n>0$$
	A {\em leftmost satcut} is a satcut $C \in S_E$ for which there exists a positive real $\alpha$ s.t. $C-\alpha$ and $S_E$ do not intersect.
	The set $C+\alpha$ is similarly defined.
	A \emph{rightmost cut} $C$ (not necessarily sat) is one for which there exists a positive real $\alpha$ s.t. $C+\alpha$ and $S_E$ do not intersect, and $C-\alpha \subset S_E$.
	We refer to leftmost and rightmost (sat)cuts as \emph{extremal cuts}.
\end{definition}
Intuitively, $C-\alpha$ ($C+\alpha)$ is the set of all cuts one obtains by slightly moving the frontier of $C$ to the left (right) by amounts less than $\alpha$.
If doing so always yields non-satisfying cuts, then $C$ is a leftmost satcut.
If moving $C$ slightly to the right always yields unsatisfying cuts, but moving it slightly left yields satcuts, then $C$ is a rightmost cut.
The reason we don't speak of rightmost \emph{sat}cuts is that we only require signals to be left-limited, not continuous.
If signals $x_n$ are all continuous, then rightmost cuts are all satisfying as well.

In a signal, there are multiple extremal cuts.
\autoref{fig:satcut lattice} suggests, and Lemma~\ref{thm:satcut} proves, that all satcuts live between a leftmost satcut and rightmost cut.

\begin{lemma}[Satcut intervals]
	\label{thm:satcut}
	Every satcut of a conjunctive predicate lies in-between a leftmost satcut and rightmost cut, and there are no non-satisfying cuts between a leftmost satcut and the first rightmost cut that is greater than it in the lattice order.
\end{lemma}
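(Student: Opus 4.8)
The plan is to work entirely with frontiers, identifying a cut with the tuple $(t_1,\dots,t_N)$ of its frontier timestamps. Two standing facts are recorded first: a tuple is the frontier of a (non-trivial) consistent cut iff $\max_n t_n-\min_n t_n\le\cskew$ (the consistency constraint established just before \autoref{def:satcut}), and it is a satcut iff additionally $x_n(t_n)\ge 0$ for every $n$. Under this identification the lattice order $\sqsubseteq$ is exactly the componentwise order, with join and meet given by componentwise max and min (\autoref{thm:cont lattice}). I would then describe each agent's non-negativity set $P_n:=\{t:x_n(t)\ge 0\}$: by \autoref{ass:basic}\ref{ass:starvation freedom} (finitely many roots per bounded interval) together with right-continuity and left-limitedness, $P_n$ is a locally finite union of disjoint intervals, each left-closed (its left endpoint is a left root, which belongs to $P_n$ by right-continuity) and either right-closed (a zero-valued right root) or right-open (a sign-change right root). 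A satcut is thus a tuple choosing one interval per agent (a ``cell'' $\prod_n I_n^{j_n}$) while staying in the slab $\max-\min\le\cskew$; the satcut set inside a single cell is the intersection of a box with the slab, hence convex.

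For the first claim, that every satcut $C$ lies between a leftmost satcut and a rightmost cut, I would establish existence on each side by a completeness/Zorn argument. On the left, order the satcuts $\sqsubseteq C$ by $\sqsupseteq$; any decreasing chain has the meet (componentwise infimum) as a lower bound, and this meet is again a satcut: consistency is inherited (intersections of consistent cuts are consistent), and $x_n(\inf)\ge 0$ because the coordinates decrease to the infimum from the right and $x_n$ is right-continuous, so $x_n(c_n^{(k)})\ge 0$ passes to the limit. Zorn's lemma then yields a minimal satcut $L\sqsubseteq C$, and minimality is exactly the leftmost condition: every element of $L-\alpha$ lies strictly below $L$, so none can be a satcut. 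On the right, right-continuity is not enough, since the join of an increasing chain of satcuts can land on a right-open endpoint where $x_n<0$. I would therefore pass to the closures $\overline{P_n}$ (adding the missing right endpoints, which exist as finite left-limits); the slab and each $\overline{P_n}$ are closed, so joins of chains stay in the enlarged family, and Zorn yields a maximal element $R\sqsupseteq C$. One then verifies $R$ meets \autoref{def:leftmost cut}: translating $R$ upward leaves $S_E$ (either exiting some $\overline{P_n}$ into a negative gap, or breaking the slab), while the cuts just below $R$ remain satcuts.

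For the second claim, let $L$ be a leftmost satcut and $R$ the first rightmost cut above it. The key reduction is to show that the order interval $[L,R]$ stays inside one cell, i.e. for every $n$ the endpoints $l_n$ and $r_n$ lie in the same interval $I_n^{j_n}$ of $P_n$. This is where ``first'' is used: if some coordinate of $R$ had already crossed a gap of $P_n$ into a later interval, then moving up from $L$ one would first reach the right endpoint $b_n^{j_n}$ of the current interval, and the cut stopped there is itself a rightmost cut strictly below $R$, contradicting minimality. Once $[L,R]$ is confined to one cell, the claim follows from convexity of the box: any consistent cut $C'$ with $L\sqsubseteq C'\sqsubseteq R$ has each coordinate $c'_n\in[l_n,r_n]\subseteq I_n^{j_n}\subseteq P_n$, so $x_n(c'_n)\ge 0$, while the slab constraint $\max-\min\le\cskew$ holds because $C'$ is a cut; hence $C'\models\formula$. (Consistency must be assumed of $C'$ rather than derived: a componentwise-intermediate tuple can violate the slab, but then it is not a cut and is excluded by the phrase ``no non-satisfying \emph{cuts}''.)

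I expect the \textbf{main obstacle} to be the right boundary. Because signals are only left-limited and the non-negativity intervals may be right-open, ``pushing to the supremum'' need not stay in $S_E$, so a rightmost cut is a genuine boundary object that need not satisfy $\formula$; obtaining its existence together with the exact conditions $R-\alpha\subset S_E$ and $(R+\alpha)\cap S_E=\emptyset$ of \autoref{def:leftmost cut} is what forces the closure argument and careful tracking of which endpoints are included. The second delicate point is making ``the first rightmost cut above $L$'' precise, since rightmost cuts above $L$ need not be totally ordered; I would phrase $R$ as a minimal rightmost cut $\sqsupseteq L$ and lean on the locally finite interval structure of each $P_n$ to guarantee that the first coordinate boundary is reached before any gap of any $P_n$ is crossed, which is exactly what keeps $[L,R]$ inside a single cell.
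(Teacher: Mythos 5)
Your proposal is correct in substance but takes a genuinely different route from the paper. The paper's proof is a direct construction: starting from the given satcut $C$ with frontier $(t_n)_n$, it defines $s_n$ as the largest backward (resp.\ forward) shift of $t_n$ preserving $x_n\geq 0$ (finite by starvation-freedom, with $x_n(t_n-s_n)\geq 0$ by right-continuity), pushes each coordinate by $s_n$, and then repairs consistency by clamping every other coordinate to within $\cskew$ of the extremal one, e.g.\ $b_n=\max(t_n-s_n,\,t_1-s_1-\cskew)$; the resulting $L$ and $R$ are extremal by construction, and the absence of non-satisfying cuts between them is read off directly from the maximality of the $s_n$. You instead obtain $L$ and $R$ non-constructively via Zorn's lemma on chains (meets of satcuts preserved by right-continuity on the left, joins in the closures $\overline{P_n}$ on the right), and you prove the second claim by a separate ``single cell plus convexity'' argument. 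Each approach buys something: the paper's construction is elementary, choice-free, and exhibits the extremal cuts as root-anchored objects, which is exactly what the abstractor and slicer later exploit; your cell argument, on the other hand, addresses the second claim for an \emph{arbitrary} pair (leftmost satcut, first rightmost cut above it), whereas the paper only asserts it ``by construction'' for the specific pair built from $C$. Two places where your sketch is thinner than it should be, though neither is fatal: (i) verifying that the Zorn-maximal element of the closed family satisfies both clauses of \autoref{def:leftmost cut} --- in particular that $R-\alpha\subset S_E$ when some coordinates are left unshifted at a right-open endpoint --- is nontrivial (the paper's own verification of the same point is equally terse, since its $R$ may likewise fail the predicate at a discontinuity); and (ii) the step ``the cut stopped at the right endpoint of the current interval is a rightmost cut strictly below $R$'' requires the same consistency-clamping construction in the remaining coordinates, not just the one offending coordinate.
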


Thus we may visualize satcuts as forming $N$-dimensional intervals with endpoints given by the extremal cuts.
The main result of this section states that there are finitely many extremal satcuts in any bounded time interval, so the extremal satcuts are the finite representation we seek for $S_E$.

\begin{theorem}
\label{thm:finitely many satcuts}
	A distributed signal has finitely many extremal satcuts in any bounded time interval.
\end{theorem}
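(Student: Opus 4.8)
The plan is to turn the uncountable set of satcut frontiers into a finite union of convex pieces and show that extremal cuts can occur only at finitely many distinguished points of each piece. I work with frontiers as points $(t_1,\ldots,t_N)\in\reals^N$; by \autoref{def:satcut} and the remarks preceding it, a nontrivial tuple is a satcut frontier exactly when $x_n(t_n)\geq 0$ for every $n$ and $|t_n-t_m|\leq\cskew$ for all $n,m$. Fix a bounded time window. By the non-Zeno clause of \autoref{ass:basic}, each signal $x_n$ has finitely many roots in the window, so $\{t : x_n(t)\geq 0\}$ is a finite union of maximal ``positive intervals,'' each left-closed at a left root by right-continuity (\autoref{def:cont signal}). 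Selecting one positive interval per agent yields finitely many \emph{cells}, and inside a cell the satcut frontiers are precisely the points obeying the box constraints $t_n\in[a_n,b_n]$ together with the difference constraints $-\cskew\leq t_n-t_m\leq\cskew$. This is a difference-bound polytope (a ``zone''): bounded, convex, with finitely many vertices.

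Next I would restate ``extremal'' as a purely local geometric condition. Reading \autoref{def:leftmost cut}, a leftmost satcut is exactly a point of $S_E$ that is \emph{locally minimal} in the coordinatewise order, i.e.\ no other satcut lies in a small lower-left box around it. As a warm-up (the fact illustrated in \autoref{fig:continuous satcut}), I would first record that every extremal cut has at least one coordinate at a root: if every $t_n$ were interior to its positive interval, the uniform shift $(t_n-\delta)_n$ would preserve all pairwise differences and all signs, hence remain a satcut arbitrarily close below-left, contradicting local minimality. The real task is to strengthen this qualitative fact into a finite count.

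The main step, and the principal obstacle, is to prove that every locally minimal point of a zone is a \emph{vertex}, so that only the finitely many vertices of the finitely many cells can be leftmost satcuts; the danger to exclude is a continuum of leftmost cuts strung along a positive-dimensional face. I would argue via tangent directions. At a point $p$ in the relative interior of a face $F$, the tangent space $T$ of $F$ is cut out by the active constraints, which for a zone are only of the form $t_n=\text{const}$ (forcing $d_n=0$) or $t_n-t_m=\text{const}$ (forcing $d_n=d_m$). Crucially, no facet normal is a \emph{sum} $e_n+e_m$, so $T$ can never tilt against the order: if $\dim F\geq 1$, then some connected group of coordinates, linked by active difference constraints, carries no active box constraint, and setting $d=-1$ on that group and $0$ elsewhere gives a nonzero tangent direction with $d\leq 0$. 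Then $p+\varepsilon d\in F\subseteq S_E$ sits strictly below-left of $p$ for small $\varepsilon>0$, contradicting local minimality. Hence leftmost satcuts are vertices.

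Finally, summing the finitely many vertices over the finitely many cells bounds the number of leftmost satcuts, and the mirror-image argument, using right roots together with the distinction in \autoref{def:leftmost cut} between rightmost cuts and rightmost satcuts where a positive interval is right-open, bounds the rightmost cuts. Combined with \autoref{thm:satcut}, the extremal cuts are the finite representation claimed. I expect the delicate points to be purely bookkeeping: the open/closed endpoint behavior (handled because left roots are always included, so the bottom vertices are genuinely attained) and the assignment of a boundary point to a cell (harmless, since it only affects an upper bound on the count).
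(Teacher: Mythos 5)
Your proof is correct in its essentials but takes a genuinely different route from the paper's. The paper argues event-by-event about frontiers: it first shows that the leftmost event of a rightmost cut must be a right root, then that every other frontier event of a rightmost cut is either a right root or exactly $\cskew$-offset from that leftmost event, and dually that leftmost-satcut frontiers consist of left roots and $(-\cskew)$-offsets from left roots; finiteness then follows immediately from non-Zenoness. You instead decompose $S_E$ over a bounded window into finitely many difference-bound zones (one per choice of positive interval per agent) and prove a polyhedral fact: a zone has no coordinatewise locally minimal (resp.\ maximal) points outside its vertex set, because every facet normal is of the form $\pm e_n$ or $e_n-e_m$, so every positive-dimensional face admits a nonzero nonpositive tangent direction. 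That observation is the right reason a continuum of extremal cuts strung along a face cannot occur, and your route yields an explicit bound (cells times vertices per zone). What it does not directly deliver is the sharper structural statement the paper extracts --- that extremal frontiers are made of roots and single $\cskew$-offsets from the extreme root --- which is precisely what gets reused to prove correctness of the abstractor in \autoref{thm:satcuts are same in cont and discrete}; you would recover it only by unwinding which constraints are active at each vertex. One caution on the rightmost-cut half: your local-maximality argument needs the non-moving coordinates of the frontier to still satisfy the predicate, which holds under the literal reading of $C-\alpha$ in \autoref{def:leftmost cut} (where some $\delta_n$ may be zero, forcing $x_n(t_n)\geq 0$ for all $n$ when $N\geq 2$); the paper's own lemmas implicitly rely on the same reading, so you are no worse off, but this is slightly more than the ``bookkeeping'' you label it as.
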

Therefore, it is conceivably possible to recover algorithmically the extremal cuts, and therefore all satcuts by \autoref{thm:satcut}.
The rest of this paper shows how.

\section{The Abstractor Process}
\label{sec:abstractor}

Having captured the structure of satcuts, we now define the distributed \emph{abstractor process} that will turn our continuous-time problem into a discrete-time one, amenable to further processing by our modified version of the slicer algorithm of \cite{chauhan2013distributed}.
This abstractor also has the task of creating a happened-before relation.
We first note a few complicating factors.
First, this will not simply be a matter of sampling the roots of each signal.
That is because extremal cuts can contain non-root events, as shown in \autoref{fig:continuous satcut}.
Thus the abstractor must somehow find and sample these non-root events as part of its operation.
Second, as in the discrete case, we need a kind of clock that allows the local detector to know the happened-before relation between events.
The local timestamp of an event, and existing clock notions, are not adequate for this.
Third, to establish the happened-before relation, there is a need to exchange event information between the processes, without degenerating everything into a centralized process (by sharing everything with everyone).
This complicates the operation of the local abstractors, but allows us to cut the number of messages in half.

\subsection{Physical Vector Clocks}
\label{sec:pvcs}

We first define Physical Vector Clocks (PVCs), which generalize vector clocks~\cite{mattern1989virtual} from countable to uncountable sets of events.
They are used by the abstractor process (next section) to track the happened-before relation.
A PVC captures one agent's knowledge, at appropriate local times, of events at other agents.
\begin{definition}[Physical Vector Clock]
	\label{def:pvc}
	Given a distributed signal $(E,\hb)$ on $N$ agents, a \emph{Physical Vector Clock}, or PVC, is a set of $N$-dimensional timestamp vectors $\pvc{n}{t} \in \nnreals^N$, where vector $\pvc{n}{t}$ is defined by the following:
	\begin{enumerate}[label=(\arabic*)]
		\item\label{def:pvc init} Initialization:
		$\pvc{n}{0}[i] = 0,\quad \forall i \in \{1, \dots, N\}$
		\item\label{def:pvc local} Timestamps store the local time of their agent:
		$\pvc{n}{t}[n] = t$ for all $t>0$.
		\item\label{def:pvc hb} Timestamps keep a consistent view of time: Let $V_n^t$ be the set of all timestamps $\pvc{m}{s}$ s.t. $\evnt{m}{s}$ happened-before $\evnt{n}{t}$ in $E$.
		Then:
		\[\pvc{n}{t}[i] = \max_{\pvc{m}{s} \in V_n^t}(\pvc{m}{s}[i]),\quad \forall i \in [N]\setminus\{n\}, t>0\]
	\end{enumerate}
PVCs are partially ordered: $\pvc{n}{t} < \pvc{m}{t'}$ iff $\pvc{n}{t} \neq \pvc{m}{t'}$ and $\pvc{n}{t}[i] \leq \pvc{m}{t'}[i]~\forall i \in [N]$.
\end{definition}
We say $\pvc{n}{t}$ is \emph{assigned} to $\evnt{n}{t}$.
The detection algorithm can now know the happened-before relation by comparing PVCs.

\begin{theorem}
	\label{thm:pvc isomorphic}
	Given a distributed signal $(E,\hb)$, let $V$ be the corresponding set of PVC timestamps.
	Then $(V,<)$ and $(E,\hb)$ are order isomorphic, i.e., there is a bijective mapping between $V$ and $E$ s.t. $\evnt{n}{t} \hb \evnt{m}{t'}$ iff $\pvc{n}{t} < \pvc{m}{t'}$.
\end{theorem}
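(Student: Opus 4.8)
The plan is to reduce the whole statement to two closed-form computations—one for the happened-before relation, one for the PVC—after which the isomorphism drops out from a short coordinatewise case analysis. So first I would pin down $\hb$ explicitly, claiming that for distinct events $\evnt{n}{t}\hb\evnt{m}{t'}$ holds iff either ($n=m$ and $t<t'$) or ($n\neq m$ and $t+\cskew\leq t'$). The only nontrivial direction is that transitivity adds nothing beyond the direct local and cross-agent rules. I would argue this by noting that local times strictly increase along every $\hb$-chain, and that any chain joining two \emph{different} agents must take at least one cross-agent step, each raising the time by at least $\cskew$; chaining the inequalities yields $t+\cskew\leq t'$.

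Second—and this is the crux—I would solve the self-referential PVC equations by conjecturing the closed form
$$\pvc{n}{t}[n] = t, \qquad \pvc{n}{t}[i] = \max(0,\, t-\cskew)\ \ (i\neq n),$$
and verifying it satisfies \autoref{def:pvc}\ref{def:pvc hb}: using the $\hb$ characterization above, among the events before $\evnt{n}{t}$ the ones with largest $i$-th coordinate are the agent-$i$ events up to local time $t-\cskew$, which contribute exactly $t-\cskew$ and realize the maximum. The delicate point is \emph{well-definedness and uniqueness}: because there are uncountably many events at real-valued times, \autoref{def:pvc}\ref{def:pvc hb} is not an ordinary induction, and each coordinate is defined by a supremum that need not a priori be attained (the agent-$n$ predecessors, for instance, only approach $t-\cskew$). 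I would close this by showing the supremum is attained (by the agent-$i$ event at $s=t-\cskew$ when $t\geq\cskew$, matching the initialization value $0$ otherwise) and that the closed form is the unique bounded solution consistent with \autoref{def:pvc}\ref{def:pvc init}--\ref{def:pvc local}.

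With both closed forms in hand the remaining steps are routine. For the \textbf{bijection}, surjectivity of $\evnt{n}{t}\mapsto\pvc{n}{t}$ onto $V$ holds by definition of $V$, and injectivity follows from reading off coordinate $n$ for same-agent events, or comparing coordinates $n$ and $m$ for $n\neq m$: the assumption $\pvc{n}{t}=\pvc{m}{t'}$ forces $t\leq t'$ and $t'\leq t$, hence $t=t'$, and then $t=\max(0,t-\cskew)<t$, a contradiction. For \textbf{order preservation} I split on $n=m$ versus $n\neq m$ and compare the two vectors coordinatewise. When $n\neq m$ the binding coordinate is $n$: the condition $\pvc{n}{t}<\pvc{m}{t'}$ is equivalent to $t\leq\max(0,t'-\cskew)$, which for $t>0$ is exactly $t+\cskew\leq t'$, i.e. $\evnt{n}{t}\hb\evnt{m}{t'}$; one then checks that the other coordinates are automatically dominated and that the vectors are distinct. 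The same comparison shows concurrent events ($|t-t'|<\cskew$) yield \emph{incomparable} PVCs, which together with antisymmetry of $<$ secures the reverse implication.

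I expect the main obstacle to be the second step: justifying that the recursively-defined, continuous-time maxima of \autoref{def:pvc}\ref{def:pvc hb} are well-defined, attained, and uniquely solved by the closed form, since the standard discrete vector-clock induction is unavailable here. A minor boundary nuisance is $t=0$, where the initialization collapses every $\pvc{n}{0}$ to the zero vector; strict injectivity (and hence the bijection) should therefore be stated for $t>0$, or the initial events identified as a single start event.
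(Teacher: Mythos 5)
Your proposal is correct, but it takes a genuinely different route from the paper. The paper never computes a closed form inside this proof: it argues abstractly from the defining clauses of \autoref{def:pvc}, via a key lemma stating that for $n\neq m$ and $t,t'\neq 0$, $\evnt{n}{t}\hb\evnt{m}{t'}$ iff $\pvc{m}{t'}[n]\geq t$. The forward direction is immediate from clauses \ref{def:pvc local}--\ref{def:pvc hb}; the reverse rests on the observation that a value $t$ in coordinate $n$ can only originate at agent $A_n$'s own event $\evnt{n}{t}$ and then propagate along happened-before chains, plus transitivity. The isomorphism then follows by reading off coordinate $n$ in one direction and ruling out cycles to get strictness in the other. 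You instead first derive the explicit characterization of $\hb$ and the closed form $\pvc{n}{t}[i]=\max(0,t-\cskew)$ --- which the paper states separately as \autoref{thm:pvc assign} and does not invoke here --- and then verify the equivalence coordinatewise. What your route buys: it forces you to confront the well-definedness and attainment of the continuous-time maxima in clause \ref{def:pvc hb}, and the $t=0$ degeneracy (every $\pvc{n}{0}$ collapses to the zero vector, so the claimed bijection genuinely fails at the initial events); both are real issues that the paper glosses over, and your flags are legitimate. What the paper's route buys: it is independent of the particular closed form, so it would survive an enrichment of $\hb$ with actual message edges (which the paper's remark anticipates), whereas your closed forms are specific to the skew-only relation of \autoref{def:dstr signal} and would have to be rederived. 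Note finally that your uniqueness claim for the PVC solution, while I believe it holds, ultimately needs the same ``coordinate-$i$ values originate only at agent $i$'' argument that drives the paper's lemma, so the two proofs share their essential combinatorial core even though they package it differently.
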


\autoref{def:pvc} is not quite a constructive definition. We need a way to actually compute PVCs.
This is enabled by the next theorem.

\begin{theorem}
	\label{thm:pvc assign}
	The assignment
	\[\pvc{n}{t} = \left\lbrace \begin{matrix} [0,\ldots,0,t,0,\ldots,0], & t<\cskew \\
	[t - \cskew, \dots, t - \cskew, t, t - \cskew, \dots, t - \cskew], & t \geq \cskew\\
	\end{matrix}\right. \]
	where the $t$ is in the $n^{th}$ position in both cases, satisfies the conditions of PVC in \autoref{def:pvc}.
\end{theorem}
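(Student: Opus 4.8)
The plan is to verify the three defining conditions of \autoref{def:pvc} for the proposed closed form. Conditions \ref{def:pvc init} and \ref{def:pvc local} are immediate from the formula: at $t=0$ the first case applies (since $\cskew>0$) and returns the all-zeros vector, while in both cases the $n$-th coordinate equals exactly $t$, so the local time is stored correctly. All the content therefore lies in condition \ref{def:pvc hb}, the consistency requirement $\pvc{n}{t}[i]=\max_{\pvc{m}{s}\in V_n^t}\pvc{m}{s}[i]$ for $i\neq n$. To evaluate this maximum I first need an explicit description of which events happened-before $\evnt{n}{t}$.

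The key lemma I would establish is a concrete characterization of $\hb$: for same-agent events, $\evnt{n}{s}\hb\evnt{n}{t}$ iff $s<t$; and for $m\neq n$, $\evnt{m}{s}\hb\evnt{n}{t}$ iff $s+\cskew\leq t$. The ``if'' directions are exactly the base rules \ref{def:hb local} and \ref{def:hb eps} of \autoref{def:dstr signal}. The cross-agent ``only if'' is where the work is, and it is the step I expect to be the main obstacle, since $\hb$ is the transitive closure of the base rules and one must rule out that chaining produces new, ``closer'' cross-agent relations. I would argue along any witnessing chain $\evnt{m}{s}=e_0\hb e_1\hb\cdots\hb e_k=\evnt{n}{t}$ that a local step strictly increases the local time while an $\cskew$-step increases it by at least $\cskew$; since $m\neq n$ the chain must change agents and hence contain at least one $\cskew$-step, so the total time advance is at least $\cskew$, giving $s+\cskew\leq t$. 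Thus transitivity adds nothing beyond the direct rules.

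With this characterization in hand, I would compute the maximum in \ref{def:pvc hb} by splitting $V_n^t$ according to the host agent $m$ of each event and reading off the $i$-th coordinate from the closed form, namely $\pvc{m}{s}[i]=s$ when $i=m$, and otherwise $0$ or $s-\cskew$ according as $s<\cskew$ or $s\geq\cskew$. For $t\geq\cskew$ and $i\neq n$, the events on agent $i$ contribute $s$ over $s\in[0,t-\cskew]$, maximized at $\evnt{i}{t-\cskew}$ (legitimately in $V_n^t$ because $(t-\cskew)+\cskew=t$), yielding $t-\cskew$; every other event in $V_n^t$ contributes a strictly smaller $i$-th coordinate (at most $t-2\cskew$ from agents other than $i$ and $n$, and strictly below $t-\cskew$ from agent $n$ itself). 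Hence the maximum is attained and equals $t-\cskew$, matching the formula. For $0<t<\cskew$, the cross-agent condition $s+\cskew\leq t$ is infeasible, so $V_n^t$ contains only same-agent events with $s<\cskew$, all of whose $i$-th coordinates vanish, and the maximum is $0$, again matching. I would close by checking the seam $t=\cskew$ (both the off-diagonal value and the maximizer collapse to $0$), confirming consistency between the two cases and that the value is a genuinely attained maximum rather than merely a supremum.
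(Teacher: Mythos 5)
Your proof is correct and rests on the same core observation as the paper's: the $\cskew$-rule of \autoref{def:dstr signal}~\ref{def:hb eps} is the only way to create cross-agent precedence, so the latest event on agent $i$ that happened-before $\evnt{n}{t}$ is $\evnt{i}{t-\cskew}$, whose $i$-th coordinate $t-\cskew$ dominates. Your execution, however, is noticeably more complete than the paper's. The paper's proof asserts that indirect predecessors ``do not need to be considered'' without justification, whereas you actually prove the needed fact via the chain argument (every step is non-decreasing in timestamp and any agent change costs at least $\cskew$, so transitive closure adds no cross-agent pairs with $s+\cskew>t$) and then evaluate the maximum in condition \ref{def:pvc hb} over all of $V_n^t$, checking that same-agent events contribute strictly less than $t-\cskew$ and third-agent events at most $t-2\cskew$, and that the maximum is genuinely attained rather than a mere supremum over an uncountable set. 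You also verify the $t<\cskew$ branch and the seam at $t=\cskew$, which the paper's proof omits entirely. In short: same route, but your version closes the gaps the paper leaves open.
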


\subsection{Abstractor Description}
\label{sec:abstractor description}

\begin{figure}[t]
	\centering
	\scalebox{0.9}{\begin{tikzpicture}
[event/.style={circle,draw,fill=white,inner sep=0pt,minimum size=2mm}]
\coordinate (process 1 sig) at (0,5);
\draw[->] (process 1 sig) -- +(7,0);
\node at (-0.2,5) [anchor=east] {$A_1$};
\coordinate (process 2 sig) at (0,3.5);
\draw[->] (process 2 sig) -- +(7,0);
\node at (-0.2,3.5) [anchor=east] {$A_2$};
\datavisualization[xy Cartesian, visualize as smooth line]
    data {
        x, y
        0.5, 4.5
        1.5, 5
        3.5, 5.5
        5.5, 5
        6, 4.5
        6.8, 4.5
    };
\datavisualization[xy Cartesian, visualize as smooth line]
    data {
        x, y
        0.5, 3
        2, 3
        3, 3.5
        4, 4
        5.3, 3.5
        6.8, 3
    };

\draw[-Implies,double] (3.5,2.5) -- +(0,-0.3);

\coordinate (process 1) at (0,1.5);
\draw[->] (process 1) -- +(7,0);
\node at (-0.2,1.5) [anchor=east] {$A_1$};
\coordinate (process 2) at (0,0);
\draw[->] (process 2) -- +(7,0);
\node at (-0.2,0) [anchor=east] {$A_2$};

\node[event,fill=black] (left root 1) at (1.5,0 |- process 1) {};
\node[event,fill=black] (right root 1) at (5.5,0 |- process 1) {};
\draw[line width=.08cm] (left root 1) -- (right root 1);
\node[event,fill=black] (left root 2) at (3,0 |- process 2) {};
\node[event,fill=black] (right root 2) at (5.3,0 |- process 2) {};
\draw[line width=.08cm] (left root 2) -- (right root 2);

\node[event,fill=black] (nonroot 1) at (6.1,0 |- process 1) {};
\node[event,fill=black] (nonroot 2) at (6.3,0 |- process 2) {};

\draw[-{Latex}] (right root 2) -- (nonroot 1);
\draw[-{Latex}] (right root 1) -- (nonroot 2);

\draw[dashed,-{Latex}] (1.5,5) -- (left root 1);
\draw[dashed,-{Latex}] (3,3.5) -- (left root 2);
\draw[dashed,-{Latex}] (5.5,5) -- (right root 1);
\draw[dashed,-{Latex}] (5.3,3.5) -- (right root 2);
\end{tikzpicture}}
	\caption{A distributed signal of two agents (top) and the output of the abstractor (bottom). The abstractor marks zero-crossings as discrete root events and creates new events (dark circles) to maintain consistency. }
	\label{fig:abstractor work}
\end{figure}
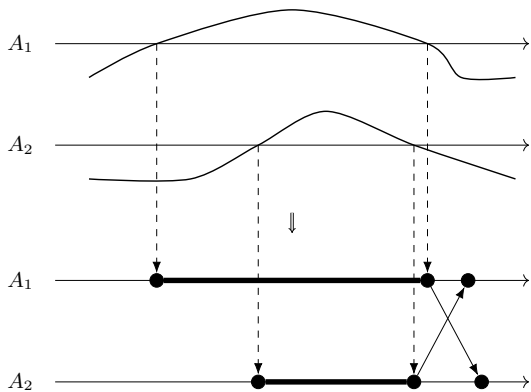

The abstractor is described in \autoref{algo:abstractor} on page \pageref{algo:abstractor}.
Its output is a stream of discrete-time events, their correct PVC values, and the relation $\hb$ between them - i.e., a discrete-time distributed signal.
This signal is processed by the local slicer processes as it is being produced by the abstractor.

\begin{algorithm}[t]
	\DontPrintSemicolon
	\SetAlgoVlined
	\SetStartEndCondition{ }{}{}
	\SetKwProg{Trigger}{trigger}{:}{end}
	\SetKwFor{ForEach}{for each}{:}{end}
	\SetKwIF{If}{ElseIf}{Else}{if}{:}{else if}{else}{end}
	\SetKw{Create}{create}
	\SetKw{Add}{add}
	\SetKw{Send}{send}
	\KwData{Signal of agent $A_n$}
	\KwResult{A stream of discrete events which are roots or $\cskew$-offset from roots}

	\BlankLine
	\Trigger{found a \emph{root} $\evnt{n}{t}$ at local time $t$}{ \label{line:abs:trigger 1}
		\Add{$\evnt{n}{t}$ info} ($n$, $t$, PVC, left or right root) to local buffer\;\label{line:abs:add root}
		\If{$\evnt{n}{t}$ is right root}{
			\ForEach{agent $m \ne n$}{
				\Send{$\evnt{n}{t}$ info} to agent $m$\;
			}
		}
	}
	\BlankLine
	\Trigger{received message about \emph{right root} $\evnt{m}{t}$ from agent $A_m$}{\label{line:abs:trigger 2}
		Set $t' := t + \cskew$, where $\cskew$ is the maximum clock skew\;
		\Create{local event $\evnt{n}{t'}$} \;\label{line:abs:virtual event}
		\Create{relation $\evnt{m}{t} \rightsquigarrow \evnt{n}{t'}$}  (setting the PVC for $\evnt{n}{t'}$ appropriately)\; \label{line:abs:squig}
		\tcc{Info for created event includes that it \emph{came} from a right root $\evnt{m}{t}$, not necessarily that it \emph{is} a root}
		\Add{$\evnt{n}{t'}$ info} ($n$, $t'$, PVC, from right root) to local buffer\;\label{line:abs:add virtual}
		\tcc{Ready events are those whose PVCs will not be updated anymore. See text for details.}
		\If{$A_n$ received at least one message about a \emph{right root} $\evnt{k}{t_k}$ from every other agent $A_k$ such that $t_k \geq t$}{\label{line:abs:ready to send}
			\tcc{Visit events in the buffer, forwarding ones that are ready to the slicer.}
			\ForEach{event $\evnt{n}{s}$ in the local buffer}{
				Set $\pvc{n}{s}[n]=s$ and $\pvc{n}{s}[k]=s-\cskew$ for all $k \neq n$ \;
				Remove $\evnt{n}{s}$ from buffer and \Send{} it to local slicer\;\label{line:abs:remove event}
			}
		}
	}
	\caption{Local abstractor for agent $A_n$}
	\label{algo:abstractor}
\end{algorithm}

The abstractor runs as follows.
It is decentralized, meaning that there is a \emph{local} abstractor running on each agent.
Agent $A_n$'s local abstractor maintains a buffer of discrete events, and consists of two trigger processes.
The first is triggered when a root is detected (by a local zero-finding algorithm; line \ref{line:abs:trigger 1}).
It stores the root's information in a local buffer (for future processing).
\emph{If it is a right root}, it also sends it to the other agents.
The second trigger process (line \ref{line:abs:trigger 2}) is triggered when the agent \emph{receives} a right root information from some other process, at which point it does three things: it creates a local discrete event and a corresponding relation $\rightsquigarrow$ between events (Lines \ref{line:abs:virtual event}-\ref{line:abs:squig}), it updates events in its local buffer to see which ones can be sent to the local slicer process (described in \autoref{sec:ourslicer}), and then it sends them.
It is clear, by construction, that $\rightsquigarrow$ is a happened-before relation: it is the subset of $\hb$ needed for detection purposes.

Before an event $\evnt{n}{t}$ is sent to the slicer, it must have a PVC that correctly reflects the happened-before relation.
This means that all events that happened-before $\evnt{n}{t}$ must be known to agent $n$, which uses them to update the PVC timestamps.
This happens when events have reached agent $A_n$ from every other agent, with timestamps that place them after $\evnt{n}{t}$ (line \ref{line:abs:ready to send}).
This is guaranteed to happen by the starvation-free assumption \ref{ass:basic}.\ref{ass:starvation freedom}.

The output of a local abstractor is a stream of discrete events, so that \emph{the output of the decentralized abstractor as a whole is a distributed discrete-time signal.}
See \autoref{fig:abstractor work}.

Given that all right roots are assigned discrete events by the first trigger, and given that $\cskew$-offset events are also created from them by the second trigger (line \ref{line:abs:virtual event}), we have the following.
\begin{theorem}
	\label{thm:satcuts are same in cont and discrete}
	All events in rightmost cuts are generated by the abstractor.
	Moreover, a rightmost cut of $E$ is also a cut of the discrete signal returned by the abstractor.
\end{theorem}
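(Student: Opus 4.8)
The statement bundles two claims: (1) every frontier event of a rightmost cut is produced by the abstractor, and (2) each rightmost cut of $E$ restricts to a consistent cut of the discrete signal $(D,\rightsquigarrow)$ output by the abstractor. By \autoref{thm:satcut} a rightmost cut $C$ is the top of a satcut interval, so the real work is to pin down the frontier $(\evnt{n}{t_n})_n$ of $C$. The plan is to prove the characterization: for every $n$, either $\evnt{n}{t_n}$ is a right root, or $t_n = t_k + \cskew$ where $\evnt{k}{t_k}$ is a right root on some agent $k\neq n$. Granting this, claim (1) is immediate, since the first trigger of \autoref{algo:abstractor} emits every right root while its second trigger emits, on every other agent, the $\cskew$-offset of every right root; and claim (2) then follows because the abstractor's relation $\rightsquigarrow$ is a sub-relation of $\hb$.

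To establish the characterization I would first extract two one-sided facts from \autoref{def:leftmost cut}. From $C-\alpha\subset S_E$, every cut obtained by shifting the frontier slightly left is satisfying, so $x_n(t_n-\delta)\geq 0$ for all small $\delta>0$ and all $n$; since signals are left-limited this yields $\lim_{s\to t_n^-}x_n(s)\geq 0$ at every coordinate. From $(C+\alpha)\cap S_E=\emptyset$, the uniform right shift $(t_n+\delta)_n$ -- which is always a legal cut because it preserves all pairwise gaps -- is non-satisfying, so some coordinate has $x_{k}(t_{k}+\delta)<0$; by the non-Zeno part of \autoref{ass:basic} a signal has finitely many roots on a bounded interval, so a single index $k$ works for all small $\delta$, and together with the left-limit fact this forces $\evnt{k}{t_k}$ to be a right root in the sense of \autoref{def:roots} (treating the continuous and sign-changing-discontinuity cases separately). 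Writing $m_0=\min_n t_n$, I would then show that every coordinate with $t_n<m_0+\cskew$ is likewise a right root: if not, right-continuity gives $x_n\geq 0$ on a right neighborhood of $t_n$, so increasing $t_n$ alone by a small $\delta$ -- which stays legal precisely because $t_n+\delta\le m_0+\cskew$ -- produces a satisfying cut strictly to the right of $C$, contradicting $(C+\alpha)\cap S_E=\emptyset$. The remaining coordinates satisfy $t_n=m_0+\cskew$; since some agent attains the minimum $m_0$ at a right root (the case $t_n=m_0<m_0+\cskew$ just handled), these are exactly the $\cskew$-offsets of a right root on another agent.

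The delicate point -- and the step I expect to be the real obstacle -- is decoupling the signal constraint from the skew constraint at the boundary, especially when a frontier coordinate sits at a sign-changing discontinuity where $x_n(t_n)<0$ even though the left limit is non-negative. There the cut $C$ itself is not satisfying, so the ``increase $t_n$ alone'' contradiction cannot rely on the values of the other, possibly-negative, frontier coordinates, and must instead be driven by the left-approaching satcuts guaranteed by $C-\alpha\subset S_E$; moreover one must verify that the legality window $t_n+\delta\le m_0+\cskew$ is genuinely open, i.e. that the coordinate is not already skew-pinned at $m_0+\cskew$. I would resolve this by a careful case split on which coordinates realize $m_0$ and which are pinned at $m_0+\cskew$, using the left-limit fact for the former and right-continuity for the latter.

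With the frontier characterized, both claims close quickly. For (1), each $t_n$ is a right root (emitted by the first trigger) or a $\cskew$-offset of a right root on a different agent (emitted by the second trigger, whose delivery is guaranteed by the reliable FIFO channels and the starvation-freedom part of \autoref{ass:basic}). For (2), let $D$ be the discrete event set and consider $C\cap D$: if $e\rightsquigarrow f$ with $f\in C\cap D$, then $e\hb f$ and $f\in C$, so $e\in C$ by consistency of $C$, and $e\in D$ since every $\rightsquigarrow$-predecessor is itself an abstractor-generated event; hence $C\cap D$ is $\rightsquigarrow$-closed. Because the frontier events lie in $D$ by (1) and are the latest discrete events of $C$ on each agent, $C\cap D$ is a consistent cut of $(D,\rightsquigarrow)$ with the same frontier as $C$, which is the assertion.
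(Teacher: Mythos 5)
Your overall route is the same as the paper's. The paper's entire proof of this theorem is the one-sentence remark preceding it, which leans on \autoref{thm:satcut and roots} (proved en route to \autoref{thm:finitely many satcuts}): every frontier event of a rightmost cut is a right root or is $\cskew$-offset from one, and \autoref{algo:abstractor} emits exactly those events. Your ``characterization'' is that lemma, re-proved with the $m_0=\min_n t_n$ bookkeeping in place of the paper's two-step argument (first the leftmost frontier event via \autoref{thm:rr of rightmost satcut}, then the remaining coordinates); the contradiction mechanism --- extend a non-right-root coordinate slightly rightward to manufacture a satcut inside $C+\alpha$ --- is identical. Two differences are worth recording. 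First, you supply an actual argument for the second claim (that $C\cap D$ is $\rightsquigarrow$-closed and has the same frontier as $C$), which the paper asserts without any proof; that addition is correct and welcome. Second, the ``delicate point'' you flag --- that when $C$ itself is non-satisfying because some $x_m(t_m)<0$ at a sign-changing discontinuity, the extended cut cannot be certified satisfying by pointing at the other frontier values --- is real, and you leave it as an unexecuted case split. Note, however, that the paper's own proof of \autoref{thm:rr of rightmost satcut} has exactly the same hole (it declares that $C'$ is satisfying ``because we only added events such that $x_n(t+\alpha)\geq 0$'', silently assuming the untouched coordinates are non-negative), so you have not fallen below the paper's level of rigor; to close the gap one must argue from the satcuts in $C-\alpha$ rather than from $C$ itself, as you suggest.
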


Thus the slicer process, described in the following section, can find the rightmost cuts when it processes the discrete signal.
What about the leftmost satcuts?
These will be handled by the slicer using the PVCs, as will be shown in the next section.
Doing it this way relieves the abstractor from having to communicate the left roots between processes, thus saving on messages and their wait times.

\section{The Slicer Process for Detecting Predicates}
\label{sec:ourslicer}

The second process in our detector is a decentralized \emph{slicer process}, so-called to keep with the common terminology in discrete distributed systems~\cite{gargbook}.
The slicer is decentralized: it consists of $N$ local slicers $\ourslicer_n$, one per agent.
The slicer runs in parallel with the abstractor and processes the abstractor's output as it is produced.
Recall that the abstractor's output consists of a stream of discrete events, coming from the $N$ agents.
These events are either roots or $\cskew$-offset from roots.
If an event is a left root or $\cskew$-offset from a left root, we will call it a left event.
We define right events similarly.
We will write $F_n$ for those events, output by the abstractor, that occurred on $A_n$.

Every slicer $\ourslicer_n$ maintains a \emph{token} $T_n$, which is a constant-size data structure to keep track of satcuts that contain $A_n$ events.
Specifically, for every event $\evnt{n}{t}$ in $F_n$, the token $T_n$ is forwarded between the agents, collecting information to determine whether there exists a satcut that contains $\evnt{n}{t}$.
We say the slicer is trying to \emph{complete} $\evnt{n}{t}$.
The token's updates are such that it will find that satcut if it exists, or determines that none exists; either way, it is then reset and sent back to its parent process $A_n$ to handle the next event in $F_n$.

Let $\evnt{n}{t}$ be an event that the slicer is currently trying to complete.
The token's updates vary, depending on whether it is currently completing a left event, or a right event.
\emph{If $T_n$ is completing a right event}, the token is updated as follows.
The token currently has a cut whose frontier contains $\evnt{n}{t}$, which is either a satcut or not.
If it is, the token has successfully completed the event and is returned to $A_n$ to handle the next event in $F_n$.
If not, then by the property of \emph{regular} predicates~\cite{chauhan2013distributed}, there exists a \emph{forbidden event} $\evnt{m}{s}$ on the frontier of the cut which either prevents the cut from being consistent or from satisfying the predicate. $T_n$ is sent to the process $A_m$ containing this forbidden event.
$T_n$'s so-called target event, whose inclusion may give $T_n$ a satcut, is the event on $A_m$ following the forbidden $\evnt{m}{s}$.
If the token does not find a next event following $\evnt{m}{s}$, then the token is kept by $\ourslicer_m$ until it receives the next event from the abstractor (which is guaranteed to happen under the starvation-free assumption).
After the token retrieves the next event, the updates to the token and progression of $\ourslicer_n$ then follow the \slicer{}~\cite{chauhan2013distributed}.
Space limitations make it impossible to describe the \slicer{} here, and we refer the reader to the detailed description in \cite{chauhan2013distributed}.

\emph{If handling a left event}, the token is updated as follows.
First, as before, $T_n$ is sent to the process $A_m$ which generates the forbidden $\evnt{m}{s}$ -- i.e., which prevents $T_n$ from completing $\evnt{n}{t}$.
$T_n$'s target event may not be the next event on that process following $\evnt{m}{s}$: that's because if $\evnt{n}{t}$ is a left root, there may exist a left event $\evnt{m}{t - \cskew}$ on $A_m$ which is part of a continuous-time leftmost satcut (by \autoref{def:dstr signal}), but which was not created by the abstractor.
In this case, if the token were to follow the updates for a right event, it would skip a potential satcut.
Instead, the slicer $\ourslicer_m$ will create this event: namely,
if $\ourslicer_m$ sees a new event $\evnt{m}{s'}$ where $s' > t - \cskew$, it knows that $\evnt{m}{t - \cskew}$ has not and will not show up (will not be produced by the abstractor) because  messages are FIFO.
The slicer at this point creates the new event $\evnt{m}{t - \cskew}$.
This is valid since in continuous-time, by definition, every moment has a corresponding event on every agent.
Once the token retrieves this created $\evnt{m}{t - \cskew}$ as its new target, the updates to the token and progression of $\ourslicer_n$ follow the \slicer{}~\cite{chauhan2013distributed}, similarly to the right event scenario.

\underline{Correctness of $\ourslicer$.}
We will show that all extremal cuts of the continuous-time signal are included in the discrete lattice of satcuts of the discrete signal.
Since the \slicer{} computes the discrete lattice, this means in particular that it computes the extremal cuts that are in it.
From these extremal cuts, we can then recover the continuous-time satcuts by \autoref{thm:satcut}.

\begin{theorem}
	\label{thm: mod cgnm correctness}
	Our slicer returns all extremal cuts.
\end{theorem}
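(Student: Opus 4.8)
The plan is to reduce the correctness of $\ourslicer$ to the known completeness of the discrete \slicer{} of \cite{chauhan2013distributed}, after arguing that every extremal cut of the continuous-time signal is a consistent, satisfying cut of the discrete signal that is actually presented to (or synthesized by) the slicer. Concretely, I would prove two inclusions separately, one for rightmost cuts and one for leftmost satcuts, since the abstractor and slicer treat right and left events asymmetrically. Throughout, I may assume that the PVCs correctly encode $\hb$ on the discrete signal (by \autoref{thm:pvc isomorphic} and \autoref{thm:pvc assign}), so that the discrete \slicer{}, when fed a discrete distributed signal together with those PVCs, computes its entire discrete satcut lattice; this is exactly the guarantee inherited from \cite{chauhan2013distributed}. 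The theorem then follows once I show that each extremal cut sits inside that discrete lattice, after which \autoref{thm:satcut} recovers all of $S_E$ from them.

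For rightmost cuts the work is almost entirely discharged by \autoref{thm:satcuts are same in cont and discrete}: every event of a rightmost cut is generated by the abstractor, and the rightmost cut is a consistent cut of the discrete signal. It remains to argue that the \slicer{}'s token-passing actually \emph{completes} to this cut. Since a rightmost cut's frontier consists of right roots and their $\cskew$-offsets, I would show that when $\ourslicer$ tries to complete each such right event, the standard forbidden-event/target-event mechanism drives the token monotonically toward the frontier of the rightmost cut and that no earlier discrete cut blocks it. Here I must be careful with the fact that a rightmost cut need not itself satisfy $\formula$ (signals are only left-limited), and handle it as the lattice supremum of the satcuts immediately to its left, which the discrete lattice from \autoref{thm:cont lattice} records.

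The leftmost-satcut case is the crux and the main obstacle. A continuous-time leftmost satcut may use an event $\evnt{m}{t-\cskew}$ that the abstractor never emits (the abstractor only communicates right roots), so the discrete signal as produced is \emph{incomplete} for this purpose, and the slicer must manufacture the missing event on the fly. I would prove three things about this creation. First, \emph{timeliness and safety}: because channels are FIFO, once $\ourslicer_m$ observes any event $\evnt{m}{s'}$ with $s' > t-\cskew$ it may soundly conclude that $\evnt{m}{t-\cskew}$ will never arrive from the abstractor, so creating it then neither races with nor duplicates a genuine abstractor event. Second, \emph{consistency}: the created $\evnt{m}{t-\cskew}$ carries a PVC/$\hb$ position agreeing with \autoref{def:dstr signal}~\ref{def:hb eps}, namely it is exactly the latest $A_m$ event permitted to be concurrent with the left root $\evnt{n}{t}$, so the augmented discrete cut is still consistent in the sense of \autoref{def:ccut}. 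Third, \emph{adequacy}: this is precisely the event whose omission would let the token skip a leftmost satcut, so once it is adopted as the token's target the \slicer{} updates reproduce the leftmost satcut guaranteed by \autoref{def:leftmost cut}. The delicate part is matching the discrete \slicer{}'s forbidden/target bookkeeping to the continuous-time geometry, i.e.\ verifying that inserting $\evnt{m}{t-\cskew}$ at the right moment neither perturbs the lattice the \slicer{} is otherwise computing nor spuriously creates non-extremal cuts.

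Finally I would assemble completeness. Every extremal cut is, by the above, a consistent cut of the (abstractor- and slicer-)augmented discrete signal whose frontier satisfies $\formula$ (or is the recorded supremum of such frontiers, for rightmost cuts). Since $\ourslicer$ launches a token for every abstractor output event and inherits the \slicer{}'s completeness, each such event is examined and the satcut containing it, if any, is found; hence every extremal cut is returned. That there are finitely many to return, so the procedure terminates on any bounded interval, is \autoref{thm:finitely many satcuts}. I expect the left-event creation argument to require the most care, since it is the only place where our slicer departs from the classical algorithm, and it is where the continuous-time and discrete-time structures must be reconciled.
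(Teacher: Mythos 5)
Your proposal follows essentially the same route as the paper: it splits the argument into rightmost cuts (whose events the abstractor already emits, per \autoref{thm:satcuts are same in cont and discrete}) and leftmost satcuts (whose $\evnt{m}{t-\cskew}$ events the slicer must create on the fly, which the paper establishes via two lemmas showing the token incorporates exactly those events for every left root), and then concludes by appealing to the modified slicer computing the full discrete satcut lattice. If anything, you are more explicit than the paper about two delicate points it passes over quickly --- the FIFO-based safety of the on-the-fly event creation and the fact that a rightmost cut need not itself satisfy $\formula$ --- but the decomposition and the key lemmas are the same.
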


We give the space and time complexity of the overall detector.
Since this is an online detector which runs forever (as long as the system is alive), we must fix a time interval for the analysis.
\begin{theorem}
	\label{thm:time complexity}
	The time complexity for each agent is $O(2RN)$, where $R$ is the number of right roots in the given analysis interval.
	The detector consumes $O(N^3)$ memory to store the tokens.
	If roots are uniformly distributed, then the local buffers of the abstractor and slicer grow at the most to size $O(N^2)$.
\end{theorem}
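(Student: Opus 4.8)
The plan is to treat the three claims separately, in each case reducing the continuous-time quantity to a count of the \emph{discrete} events emitted by the abstractor and then charging the known per-event cost of the CGNM slicer of \cite{chauhan2013distributed}.

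For the time bound, I would first count the discrete events each local abstractor must handle. Only right roots generate traffic: the first trigger (line~\ref{line:abs:trigger 1}) broadcasts each right root to the $N-1$ other agents, and by the second trigger each received right root spawns exactly one $\cskew$-offset event (line~\ref{line:abs:virtual event}). Hence, per agent, the discrete events fall into two groups each of size $O(R)$ — the root events themselves and the offset events induced by right-root messages — so an agent processes $O(2R)$ events overall (the created left event $\evnt{m}{t-\cskew}$ of \autoref{sec:ourslicer} adds at most one more event per left-root completion and does not change this count). I would then charge the per-event cost. On the abstractor side, broadcasting a right root costs $O(N)$. On the slicer side, completing a single event drives one token traversal, and by \autoref{thm: mod cgnm correctness} our token updates coincide with those of the CGNM slicer once the target event is fixed, so each completion costs $O(N)$: the price of scanning the $N$-agent frontier for a forbidden event and comparing $N$-dimensional PVCs. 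Multiplying the $O(2R)$ events by the $O(N)$ per-event cost gives the claimed $O(2RN)$, with \autoref{thm:satcuts are same in cont and discrete} ensuring that the rightmost cuts actually arise among these discrete events.

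The token-memory claim is purely structural. There are exactly $N$ tokens, one per local slicer $\ourslicer_n$. Each token represents a candidate cut by its frontier, an $N$-tuple of events, and each event carries its $N$-dimensional PVC (\autoref{def:pvc}); thus a single token occupies $O(N^2)$ space, and the $N$ tokens together occupy $O(N^3)$. I would simply verify that the token carries no other unbounded data, which follows from its being a constant-size record augmented by one frontier.

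The buffer bound under uniform root distribution is where I expect the main obstacle. An event $\evnt{n}{s}$ leaves the buffer only when the readiness test of line~\ref{line:abs:ready to send} fires, i.e.\ once a later right root has been heard from every other agent. I would argue that under a uniform arrival rate the ``waiting depth'' — the number of an agent's own events still pending while it awaits one later right root from each of the other $N-1$ agents — is $O(N)$, and that at most $O(N)$ agents can contribute pending events to a buffer at once, giving the product $O(N^2)$. The non-Zeno clause of Assumption~\ref{ass:basic}.\ref{ass:starvation freedom} guarantees the readiness test \emph{eventually} fires, so the buffer never grows unboundedly; the delicate step is converting ``eventually'' into the sharp $O(N^2)$ estimate, which requires making the steady-state accounting of offset events per skew window precise and ruling out uniform-but-adversarial interleavings that could transiently inflate the buffer.
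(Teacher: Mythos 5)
Your treatment of the first two claims is essentially sound and close to the paper's. For the time bound, the paper totals the system-wide cost as $O(R(N-1+N)+2N^2R)$ — charging the modified slicer $O(N^2D)$ with $D\leq 2R$ discrete events — and then divides by $N$ to obtain the per-agent figure $O(2RN)$; you instead charge $O(N)$ per event per token and multiply by the $O(2R)$ events one agent's token completes. The two accountings agree in the end, and yours arguably reaches the per-agent bound more directly than the paper's divide-by-$N$ normalization, though you should be explicit that the $O(N)$ per-completion cost is \emph{per token hop} and that the number of hops is amortized over events as in \cite{chauhan2013distributed}; you also omit the $O(N)$ cost of creating each slicer-generated target event (PVC construction), which the paper counts separately as $O(N\cdot M)$ with $M=O(R)$ — this does not change the asymptotics. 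The token-memory argument ($N$ tokens, each holding $N$ PVCs of dimension $N$) is identical to the paper's.

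The genuine gap is in the third claim, and you have flagged it yourself without closing it. The paper's argument is that, under a uniform root distribution, agent $A_n$'s buffer holds at most the roots received from the other $N-1$ agents before the readiness test of line~\ref{line:abs:ready to send} fires, i.e.\ $O(N)$ \emph{events}, each carrying an $O(N)$-dimensional PVC, for $O(N^2)$ \emph{space}. Your sketch instead multiplies an $O(N)$ ``waiting depth'' of pending events by $O(N)$ contributing agents to get $O(N^2)$ \emph{events}; once you attach the $O(N)$ PVC to each, that yields $O(N^3)$ space, overshooting the stated bound. So as written your accounting does not establish the theorem's $O(N^2)$ buffer claim — you need the paper's tighter count of $O(N)$ resident events (one outstanding right root per foreign agent in steady state under uniform arrivals), not $O(N^2)$. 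The concern you raise about adversarial interleavings transiently inflating the buffer is legitimate but is exactly what the uniform-distribution hypothesis is invoked to exclude; the paper does not attempt a worst-case bound here either.
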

Finally, there is no bound on detection delay, since we don't assume any bounds on message transmission time.
Assuming some bound on transmission delay yields a corresponding bound on detection delay.

\subsection{Worked-out example}
\label{sec:worked-out example}

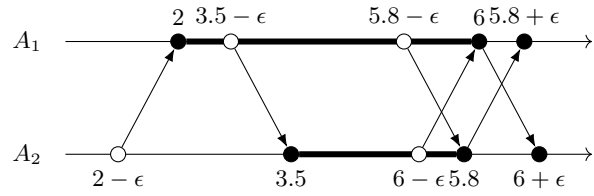
\begin{figure}[t]
	\centering
	\begin{tikzpicture}
[event/.style={circle,draw,fill=white,inner sep=0pt,minimum size=2mm}]
\coordinate (process 1) at (0,1.5);
\draw[->] (process 1) -- +(7,0);
\node at (-0.2,1.5) [anchor=east] {$A_1$};
\coordinate (process 2) at (0,0);
\draw[->] (process 2) -- +(7,0);
\node at (-0.2,0) [anchor=east] {$A_2$};

\node[event,fill=black,label=above:$2$] (left root 1) at (1.5,0 |- process 1) {};
\node[event,fill=black,label=above:$6$] (right root 1) at (5.5,0 |- process 1) {};
\draw[line width=.08cm] (left root 1) -- (right root 1);
\node[event,fill=black,label=below:$3.5$] (left root 2) at (3,0 |- process 2) {};
\node[event,fill=black,label=below:$5.8$] (right root 2) at (5.3,0 |- process 2) {};
\draw[line width=.08cm] (left root 2) -- (right root 2);

\node[event,label=above:$3.5 - \cskew$] (vevent 1) at (2.2,0 |- process 1) {};
\node[event,label=above:$5.8 - \cskew$] (vevent 2) at (4.5,0 |- process 1) {};
\node[event,label=below:$6 - \cskew$] (vevent 3) at (4.7,0 |- process 2) {};
\node[event,label=below:$2 - \cskew$] (vevent 4) at (0.7,0 |- process 2) {};
\node[event,fill=black,label=above:$5.8 + \cskew$] (nonroot 1) at (6.1,0 |- process 1) {};
\node[event,fill=black,label=below:$6 + \cskew$] (nonroot 2) at (6.3,0 |- process 2) {};

\draw[-{Latex}] (vevent 4) -- (left root 1);
\draw[-{Latex}] (vevent 1) -- (left root 2);
\draw[-{Latex}] (vevent 2) -- (right root 2);
\draw[-{Latex}] (vevent 3) -- (right root 1);
\draw[-{Latex}] (right root 2) -- (nonroot 1);
\draw[-{Latex}] (right root 1) -- (nonroot 2);
\end{tikzpicture}
	\caption{Example of \autoref{sec:worked-out example}. Bold intervals are where the local signals are non-negative. The happened-before relation is illustrated with solid arrows. The predicate is $\phi = (x_1 \geq 0) \land (x_2 \geq 0)$. Solid circles represent discrete events returned by the abstractor; hollow circles are those created by the slicers. The leftmost satcut of this example is $[3.5 - \cskew, 3.5]$ and the rightmost is $[6, 5.8]$.}
	\label{fig:worked example}
\end{figure}

We now work through an example execution of the detector on \autoref{fig:worked example}.
We focus on agent $A_2$, its abstractor $\ourabstractor_2$, slicer $\ourslicer_2$ and its token $T_2$.
\begin{enumerate}
	\item Agent $A_2$ encounters a left root in the signal at local time $3.5$.
	This information is forwarded to the abstractor.
	\item The abstractor $\ourabstractor_2$ adds the new root to its buffer with a PVC =$[3.5 - \cskew, 3.5]$.
	\item $A_2$ finds a right root in the signal at local time $5.8$ and forwards it to $\ourabstractor_2$.
	\item The abstractor sends the root information to agent $A_1$.
	It then adds this root to its buffer with a PVC timestamp of $[5.8 - \cskew, 5.8]$.
	\item Abstractor $\ourabstractor_2$ receives a message from $A_1$ about a right root at $A_1$'s local time $6$.
	Note that this is the first knowledge $A_2$ has about anything that is occurring on $A_1$, even though $A_1$ has already found a left root.
	\item $\ourabstractor_2$ uses $A_1$'s message to create a new local event at $6 + \cskew$ with PVC $[6, 6 + \cskew]$.
	\item $\ourabstractor_2$ also adds this new local event to its buffer.
	Since all messages are FIFO, $A_2$ knows that there will be no new messages which will create events before $6 + \cskew$.
	Thus, it can remove both of the events $3.5$ and $5.8$ from the buffer and forward them to its local slicer $\ourslicer_2$.
	At this point both of $A_1$'s events have been forwarded to {\em its} slicer, although $A_2$ has no knowledge of this.
	\item The slicer $\ourslicer_2$ receives an event with a PVC $[3.5 - \cskew, 3.5]$.
	Token $T_2$ is waiting for the next event, so it adds this event to its potential cut.
	\item The token is processed with its new potential cut.
	The cut is found to be inconsistent since $T_2$ has no information about any $A_1$ events.
	\item The token's target is set to be $3.5 - \cskew$ on $A_1$ and the token is sent to $A_1$.
	\item $A_1$ receives $T_2$.
	It walks through its local events $2$ and $6$ and determines that $T_2$'s target event is between the two.
	\item $\ourslicer_1$ creates a new event $\evnt{1}{3.5 - \cskew}$ and notes that $x_1(3.5-\cskew)\geq 0$.
	\item Token $T_2$ incorporates the new event to its potential cut.
	The new potential cut is consistent and satisfies the predicate.
	It is then sent back to $A_2$.
	\item $A_2$ receives $T_2$.
	$T_2$ indicates a satisfying cut, which the agent outputs as a result.
	It then advances $T_2$ to its next event at time $5.8$.
	\item $T_2$ has the current cut of $[3.5 - \cskew, 5.8]$.
	This is not consistent, so it is given the target $5.8 - \cskew$ on $A_1$.
	It is then sent to $A_1$.
	\item $A_1$ receives the token.
	$\ourslicer_1$ walks through its local events and finds that the token's target is between the left root and the right root.
	\item $\ourslicer_1$ creates a new event at $5.8 - \cskew$ and notes that $x_1(5.8 - \cskew)\geq 0$.
	\item The token adds the event to its potential cut.
	It finds that its new potential cut is consistent and satisfies the predicate.
	It is then sent back to $A_2$.
	\item $A_2$ receives $T_2$ and outputs the satcut.
	The algorithm then continues with new events as they occur.
\end{enumerate}
Through this example, agent $A_2$ discovered the satcuts $[3.5 - \cskew, 3.5]$ and $[5.8 - \cskew, 5.8]$.
The first is the leftmost satcut of the interval of satcuts.
$A_1$ discovered an additional satcut $[6, 6 - \cskew]$.
Joining this satcut with $A_2$'s second satcut returns a result of $[6, 5.8]$, which is the rightmost satcut of the interval of satcuts.

\section{Case Studies and Evaluation}
\label{sec:experiments}

We implemented our detection algorithm and ran experiments to
1) illustrate its operation, and
2) observe runtime scaling with number of agents and with average rate of events.
The detector was implemented in Julia for ease of prototyping, but future versions will be in C for speed.
All experiments are replicated to exhibit $95\%$ confidence interval.
Experiments were run on a single thread of an Ubuntu machine powered by an AMD Ryzen 7 5800X CPU @ 3.80GHz.
Code can be found at \url{https://github.com/sabotagelab/phryctoria}.

We consider two sources of data:
the first is a set of $N$ synthetically generated signals, $N=1...6$.
Each signal has a $5s$ duration, and is generated randomly while ensuring an average root rate of $\mu_n$.
That is, on average, $\mu_n$ roots exist in every second of signal $x_n$.
For the second source of data, we use the Fly-by-Logic toolbox~\cite{PantAM17CCTA} to control up to 6 simulated UAVs (i.e., drones) performing various reach-avoid missions.
Their 3-dimensional trajectories are recorded over 6 seconds.
We monitor the predicate ``All UAVs are at a height of at least $10m$ simultaneously''.
Maximum clock skew $\cskew$ is set to 0.05s.

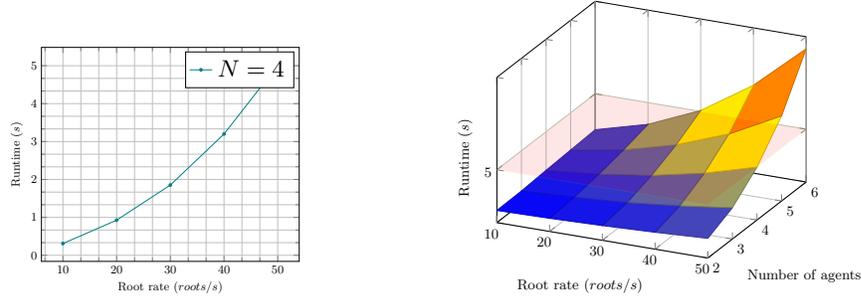
\begin{figure}[!ht]
	\begin{subfigure}[t]{0.45\textwidth}
		\centering
		\scalebox{0.5}{\begin{tikzpicture}
\begin{axis}[
    grid=both,
    minor tick num=2,
    xlabel={Root rate ($roots/s$)},
    ylabel={Runtime ($s$)},
    legend style={nodes={scale=2.0, transform shape}}
    ]

    \addplot[color=teal, mark=*, ultra thin, mark options={scale=0.5}] coordinates {
        (10, 0.310)
        (20, 0.927)
        (30, 1.853)
        (40, 3.199)
        (50, 5.013)
    };
    \addlegendentry{$N = 4$}
\end{axis}
\end{tikzpicture}}
		\caption{Runtime vs root rate on 4 synthetic signals.}
		\label{fig:rr_vs_rt}
	\end{subfigure}
	\hfill
	\begin{subfigure}[t]{0.45\textwidth}
		\centering
		\scalebox{0.6}{\begin{tikzpicture}
\begin{axis}[
    xlabel={\small Root rate $(roots/s)$},
    ylabel={\small Number of agents},
    zlabel={\small Runtime $(s)$},
    xtick = {10,20,30,40,50},
    ytick = {2,3,4,5,6},
    ztick = {5},
    grid,
    ]
    \addplot3[
        surf,
    ]
    coordinates {
        (10,2,0.06) (10,3,0.117) (10,4,0.31) (10,5,0.475) (10,6,0.634)

        (20,2,0.18) (20,3,0.462) (20,4,0.927) (20,5,1.56) (20,6,2.365)

        (30,2,0.373) (30,3,0.858) (30,4,1.853) (30,5,3.262) (30,6,5.119)

        (40,2,0.636) (40,3,1.542) (40,4,3.199) (40,5,5.65) (40,6,9.251)

        (50,2,0.974) (50,3,2.389) (50,4,5.013) (50,5,8.92) (50,6,14.751)
    };
    \addplot3[surf, red, opacity=0.1] coordinates {
        (10, 2, 5) (10, 6, 5)

        (50, 2, 5) (50, 6, 5)
    };
\end{axis}
\end{tikzpicture}}
		\smallcaption{Online monitoring. The red horizontal plane indicates the runtime threshold (namely, 5s) below which it is possible to do online detection.}
		\label{fig:online_monitoring}
	\end{subfigure}
	\caption{Runtime vs root rate and $N$ on synthetic data.}
	\label{fig:runtime vs root rate}
\end{figure}

\textbf{\bf Effect of root rate ($\mu_n$) on runtime.}
We use $4$ synthetic signals of $5s$ duration, and measure the detection runtime as the root rate for all signals is varied between $10 roots/s$ and $50 roots/s$.
\autoref{fig:rr_vs_rt} shows the results.
Naturally, as $\mu_n$ increases, so does the runtime due to having to process more tokens.

\paragraph*{\bf Online detection.} We want to identify when it is possible for us to perform online detection with the Julia implementation, i.e. such that the detector finishes before the end of the signal being processed.
To this end, we use the synthetic signals of duration 5s and vary both root rate and number of agents.
\autoref{fig:online_monitoring} shows the results:
all combinations of root rates and number of agents with runtimes under the threshold of 5s can be performed online with the hardware setup used for these experiments.

\begin{figure}[t]
	\begin{subfigure}[t]{0.45\textwidth}
		\centering
		\begin{tikzpicture}
\begin{axis}[
    width=\columnwidth,
    grid=both,
    minor tick num=2,
    xlabel={\small Number of agents},
    ylabel={\small Runtime ($s$)},
    legend style={nodes={scale=0.8, transform shape}}
    ]

    \addplot[color=teal, mark=*, ultra thin, mark options={scale=0.5}] coordinates {
        (2, 0.974)
        (3, 2.389)
        (4, 5.013)
        (5, 8.920)
        (6, 14.751)
    };
    \addlegendentry{$\mu_n = 50 roots/s$}
\end{axis}
\end{tikzpicture}
		\caption{Detection of synthetic signals at 50 roots/s}
		\label{fig:sig_vs_rt}
	\end{subfigure}
	\hfill
	\begin{subfigure}[t]{0.45\textwidth}
		\centering
		\begin{tikzpicture}
\begin{axis}[
    width=\columnwidth,
    grid=both,
    minor tick num=2,
    xlabel={\small Number of agents},
    ylabel={\small Runtime ($s$)},
    legend style={nodes={scale=0.8, transform shape}}
    ]

    \addplot[color=teal, mark=*, ultra thin, mark options={scale=0.5}] coordinates {
        (2, 0.031)
        (3, 0.067)
        (4, 0.193)
        (5, 0.262)
        (6, 0.318)
    };
\end{axis}
\end{tikzpicture}
		\smallcaption{Detection of UAV signals}
		\label{fig:sig_vs_uav}
	\end{subfigure}
	\caption{Runtime vs number of agents.}
	\label{fig:runtime vs N}
\end{figure}

\textbf{Effect of number of agents on runtime.}
\autoref{fig:runtime vs N} shows the effect of number of agents $N$ on runtime.
As expected, the runtime increases with $N$.

\section{Conclusion}
\label{sec:conclusion}
We have defined the first decentralized algorithm for continuous-time
predicate detection in partially synchronous distributed CPS.
To do so we analyzed the structure
of satisfying consistent cuts for conjunctive predicates, introduced a new notion of clock, and modified a classical discrete-time predicate detector.

\bibliographystyle{splncs04}
\bibliography{references}
\appendix
\section{Proofs}

\subsection{Proof of Proposition~\ref{thm:cont lattice}}
Define the intersection $I = C\cap C'$ and let $e$ be an element of $I$.
Then by definition of a cut, every event that happened-before $e$ is in $C$ and in $C'$, and therefore is in their intersection, so $I$ is a cut.
To show that it satisfies the predicate, note that the frontier of $I$ is made of events $(t_n,x_n(t_n))$ such that $t_n = \max \{t~|~ (t,x_n(t)) \in C \cap C'\}$.
In words, $(t_n,x_n(t_n))$ is the last event on signal $x_n$ belonging to both satcuts, which implies it is the last event on at least one of the cuts, say $C$.
Therefore $(t_n,x_n(t_n))$ is on the frontier of $C$, and so  $x_n(t_n) \geq 0$ by definition of a conjunctive predicate.
Since this is true for every $n$ in $[N]$, we have that the frontier of $I$ is a consistent state that satisfies the predicate, and so $I \models \formula$.

The union $C\cup C'$ is also a satcut by similar arguments, so the set of satcuts is a lattice.

\subsection{Proof of Lemma \ref{thm:satcut}}
Let $C$ be a satcut, so that $x_n(t_n) \geq 0$ for every $(t_n,x_n(t_n))$ in its frontier.
Let $s_n$ be the biggest shift backwards in time preserving positivity:
\begin{equation}\label{eq:sn}
	s_n \defeq \sup\{s \such s \geq 0 \text{ and } \forall~0\leq \sigma \leq s.~ x_n(t_n-\sigma)\geq 0\}.
\end{equation}
By the starvation-freedom assumption \autoref{ass:basic}.\ref{ass:starvation freedom}, $s_n$ is finite and by the right-continuity of $x_n$, $x_n(t_n-s_n)\geq 0$.
Now the cut with frontier $(\evnt{n}{t_n - s_n})_n$ satisfies the predicate, but might not be consistent because it could be that $|t_n-s_n - (t_m-s_m)|>\cskew$ for some $n,m$.
Suppose without loss of generality that $t_1 - s_1$ is the largest of all the $t_n-s_n$'s.
Define $b_n = \max(t_n-s_n,t_1-s_1-\cskew)$ for all $n>1$.
Note that $b_n\leq t_n$ because $C$ is consistent\footnote{Indeed, $t_1-t_n\leq \cskew$ so a fortiori $t_1-t_n \leq \cskew+s_1$ and so $b_n = t_1-s_1-\cskew \leq t_n$. The other case, $b_n=t_n-s_n$, is immediate.}, and $x_n(b_n)\geq 0$.
Then the cut $L$ with frontier $(\evnt{1}{t_1-s_1}, \evnt{2}{b_2},\ldots,\evnt{N}{b_N})$ is consistent and satisfies the predicate.
It is also leftmost by construction of $s_1$.
Therefore $L$ is a leftmost satcut.

The reasoning for rightmost cuts follows the above lines, except for predicate satisfaction.
Namely: let $s_n$ now be the biggest shift forwards in time preserving positivity:
\begin{equation}\label{eq:sn rightmost}
	s_n \defeq \sup\{s~|~s\geq0 \text{ and } \forall~0\leq \sigma \leq s.~x_n(t_n+\sigma)\geq 0\}.
\end{equation}
By the starvation-freedom assumption \autoref{ass:basic}.\ref{ass:starvation freedom}, $s_n$ is finite.
Now the cut with frontier $(\evnt{n}{t_n + s_n})_n$ might not be consistent because it could be that $|t_n+s_n - (t_m+s_m)|>\cskew$ for some $n,m$.
Suppose without loss of generality that $t_1 + s_1$ is the smallest of all the $t_n+s_n$'s.
Define $b_n = \min(t_n+s_n,t_1+s_1+\cskew)$ for all $n>1$.
Note that $b_n\geq t_n$ because $C$ is consistent.
Then the cut $R$ with frontier $(\evnt{1}{t_1+s_1}, \evnt{2}{b_2},\ldots,\evnt{N}{b_N})$ is consistent, but does not necessarily satisfy the predicate because of possible discontinuities.
(Namely, if $t_n+s_n$ is a point of discontinuity for $x_n$ then possibly $x_n(t_n+s_n)<0$.)
$R$ is also rightmost by construction of $s_1$ and the $b_n$.
Therefore $R$ is a rightmost cut.

Thus every satcut is between a leftmost satcut and rightmost cut.
Also by construction of $L$ and $R$ (specifically, \autoref{eq:sn} and \ref{eq:sn rightmost}), there is no cut in-between that does not satisfy the predicate.
That is, there is no $C$ s.t. $L\sqsubseteq C \sqsubseteq R$ and $C\not\models \formula$. (Here $\sqsubseteq$ is the ordering relation on the lattice of cuts).

\subsection{Proof of \autoref{thm:finitely many satcuts}}
We will need the following definitions:
the \textit{leftmost event of a cut} $C$ is an event $\evnt{n}{t} \in \front(C)$ where $t \leq t'$ for all other events $\evnt{m}{t'} \in \front(C)$.
With $\beta$ a real number, an event $\evnt{m}{t'}$ is said to be $\beta$-offset from $\evnt{n}{t}$ if and only if $t' = t + \beta$.

We will need the following three lemmata.
\begin{lemma}
	\label{thm:rr of rightmost satcut}
	The leftmost event of a rightmost cut is a right root.
\end{lemma}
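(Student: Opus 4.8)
The plan is to analyze the behavior of the signal $x_k$ at the leftmost event $\evnt{k}{t_k}$ of a rightmost cut $C$, with frontier $(\evnt{n}{t_n})_n$ and witnessing parameter $\alpha>0$, where $t_k=\min_n t_n$. I would establish three facts in order: (i) $x_k$ is non-negative just to the left of $t_k$; (ii) $x_k$ becomes negative immediately to the right of $t_k$; and (iii) these two facts force $t_k$ to be a root, hence a right root by \autoref{def:roots}.

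For (i), I would use the condition $C-\alpha\subseteq S_E$ from \autoref{def:leftmost cut}. The uniform leftward shift, whose frontier is $(\evnt{n}{t_n-\delta})_n$ for $\delta\in(0,\alpha]$, is always a consistent cut (shifting every coordinate by the same amount preserves all pairwise differences), and it lies in $C-\alpha$, hence is a satcut. Therefore $x_n(t_n-\delta)\geq 0$ for every $n$ and every such $\delta$; specializing to $n=k$ and letting $\delta\to 0^+$ shows the left limit $\lim_{s\to t_k^-}x_k(s)\geq 0$ (this limit exists by left-limitedness in \autoref{def:cont signal}). I would similarly aim to extract that the remaining frontier values satisfy $x_m(t_m)\geq 0$ for $m\neq k$, by choosing leftward perturbations that hold coordinate $m$ fixed while moving a larger-timestamp coordinate; these remain consistent cuts, hence satcuts, and reading off their frontiers gives the desired non-negativity.

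Step (ii) is the crux and I expect it to be the main obstacle. I would argue by contradiction: suppose $x_k$ is not negative on any right neighborhood of $t_k$. Invoking the non-Zeno part of \autoref{ass:basic}.\ref{ass:starvation freedom} to rule out roots of $x_k$ accumulating at $t_k$, this assumption upgrades to $x_k(t_k+\sigma)\geq 0$ for all $\sigma$ in some interval $(0,\rho]$. I would then advance only the leftmost coordinate: consider the frontier obtained from $C$ by replacing $\evnt{k}{t_k}$ with $\evnt{k}{t_k+\sigma}$ for small $\sigma\in(0,\min(\alpha,\cskew,\rho)]$ and leaving the others fixed. Because $k$ carries the smallest timestamp, $t_k\leq t_m\leq t_k+\cskew$ for all $m$, so after the shift $|t_k+\sigma-t_m|\leq\cskew$ still holds and the perturbed frontier is a consistent cut. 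By construction it lies in $C+\alpha$, and by step (i) together with the contradiction hypothesis all its frontier values are non-negative, so it is a satcut. This contradicts $C+\alpha\cap S_E=\varnothing$. The delicate points are guaranteeing non-negativity of the \emph{unshifted} frontier values $x_m(t_m)$ for all $m\neq k$ — which is where $C$ sitting at the right boundary of $S_E$ is essential, and which requires particular care for the coordinate carrying the largest timestamp, since $C$ itself need not satisfy $\formula$ — together with the non-Zeno conversion of ``not a right root'' into one-sided non-negativity.

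Finally, for (iii), step (ii) gives $x_k<0$ on $(t_k,t_k+\delta]$, so right-continuity (\autoref{def:cont signal}) yields $x_k(t_k)=\lim_{s\to t_k^+}x_k(s)\leq 0$. If $x_k(t_k)=0$ then $t_k$ is a root by definition; if $x_k(t_k)<0$ then, since the left limit is $\geq 0$ by step (i), the signal changes sign at $t_k$, which is again a root by \autoref{def:roots}. In either case $t_k$ is a root followed by negative values, i.e.\ a right root, as claimed. The companion construction in the proof of \autoref{thm:satcut}, where the leftmost coordinate of the rightmost cut $R$ sits at the maximal forward shift $t_1+s_1$ preserving positivity, is the intuition driving this argument.
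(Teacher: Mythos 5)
Your proposal follows essentially the same route as the paper's proof: assume for contradiction that the leftmost event $\evnt{k}{t_k}$ is not a right root, deduce that $x_k$ stays non-negative on a small right-neighborhood of $t_k$, then extend the cut by adding events $\evnt{k}{t_k+\alpha}$ on the leftmost coordinate only --- consistency being preserved precisely because that coordinate carries the smallest timestamp --- which yields a satcut in $C+\alpha$ and contradicts rightmostness. You are in fact more explicit than the paper on the two points it glosses over: the use of non-Zenoness to upgrade ``not a right root'' into one-sided non-negativity, and the need to establish that the unshifted frontier values $x_m(t_m)$, $m\neq k$, are non-negative (the paper simply asserts that the extended cut is satisfying ``because we only added events such that $x_n(t+\alpha)\geq 0$'', tacitly assuming the rest of the frontier already satisfies $\formula$, even though a rightmost cut is not required to be a satcut). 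So the delicate point you flag --- in particular for the coordinate carrying the largest timestamp, which cannot always be held fixed while another coordinate is perturbed leftward inside $C-\alpha$ --- is genuine, but it is a gap shared with, not introduced relative to, the paper's own argument.
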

\begin{proof}
	Consider the leftmost event $\evnt{n}{t}$ of a rightmost cut $C$.
	Because $C$ is rightmost, then $x_n(t-\delta)\geq 0$ for all sufficiently small positive $\delta$.
	Assume for a contradiction that $\evnt{n}{t}$ is not a right root, so $x_n(t+\alpha)\geq 0$ for all sufficiently small $\alpha \geq 0$, say all $\alpha$ strictly less than some $\overline{\alpha}$.
	Since $\evnt{n}{t}$ is leftmost, we can add the events $\evnt{n}{t+\alpha}$, $0\leq \alpha \leq \min\{\frac{\cskew}{2},\frac{\overline{\alpha}}{2}, \gamma\}$, to $C$ to form a new  cut $C'$.
	Choosing $\gamma$ small enough guarantees that $C'$ is consistent.
	$C'$ is also satisfying because we only added events such that $x_n(t+\alpha)\geq 0$.
	This shows $C$ is not a rightmost cut, which contradicts our choice of $C$.
\end{proof}

\begin{lemma}
	\label{thm:satcut and roots}
	All events of the frontier of a rightmost cut are either right roots or $\cskew$-offset from a right root.
\end{lemma}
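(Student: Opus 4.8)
Since this lemma strengthens Lemma~\ref{thm:rr of rightmost satcut} from the single leftmost event to \emph{all} frontier events, I would reuse and extend that proof's perturbation idea. Fix a rightmost cut $C$ with frontier $(\evnt{n}{t_n})_n$, and let $\evnt{\ell}{t_\ell}$ with $t_\ell \defeq \min_n t_n$ be a leftmost event; by Lemma~\ref{thm:rr of rightmost satcut} it is a right root. Consistency of $C$ forces every frontier timestamp into the window $t_\ell \le t_n \le t_\ell + \cskew$. The lemma then reduces to a dichotomy on where $t_n$ sits in this window: I will show that if $t_n = t_\ell + \cskew$ the event is $\cskew$-offset from the right root $\evnt{\ell}{t_\ell}$, and that if $t_n < t_\ell + \cskew$ the event must itself be a right root.

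The first case is immediate from the definition of $\cskew$-offset. For the second case, I would argue by contradiction, mirroring Lemma~\ref{thm:rr of rightmost satcut}: assuming $\evnt{n}{t_n}$ is not a right root, the signal $x_n$ stays non-negative in a right-neighborhood of $t_n$, so I advance only the $n$-th frontier event to $\evnt{n}{t_n + \beta}$ for a small $\beta > 0$, keeping all other coordinates fixed, to build a new cut $C'$. Because $t_n$ is \emph{strictly} inside the window, a small enough $\beta$ keeps the binding constraint $t_n + \beta - t_\ell \le \cskew$ satisfied, so $C'$ is consistent; and since $x_n(t_n + \beta) \ge 0$ and the remaining frontier values are unchanged, $C'$ satisfies the predicate. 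Such a $C'$ lies in $(C + \alpha) \cap S_E$, contradicting the rightmost property of $C$. Hence $\evnt{n}{t_n}$ is a right root.

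\textbf{Main obstacle.} The delicate step is guaranteeing that the perturbed cut $C'$ is \emph{simultaneously} consistent and satisfying. Consistency hinges on the strict inequality $t_n < t_\ell + \cskew$, so I must treat the window's right edge ($t_n = t_\ell + \cskew$) as a genuinely separate case rather than a degenerate instance. Satisfaction requires that \emph{all other} frontier values $x_m(t_m)$ already be non-negative, which I would extract from the rightmost property $C - \alpha \subset S_E$ by freezing coordinate $n$ and shifting another coordinate left; the subtlety here is a possible discontinuity at $t_m$, where the right-continuous value $x_m(t_m)$ could differ from the left-limit, and this must be reconciled with right-continuity of $x_m$ and with the construction of rightmost cuts in Lemma~\ref{thm:satcut}. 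Getting this interplay between consistency, satisfaction, and discontinuities right is the crux of the proof.
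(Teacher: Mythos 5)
Your proposal is correct and follows essentially the same route as the paper: invoke Lemma~\ref{thm:rr of rightmost satcut} to get a right-root leftmost event, use consistency to confine every other frontier timestamp to the window $[t_\ell, t_\ell+\cskew]$, and for a non-right-root event strictly inside that window perturb its coordinate rightward to build a consistent satcut contradicting rightmostness. Your closing remarks on discontinuities and on extracting non-negativity of the other frontier values from $C-\alpha\subset S_E$ make explicit a point the paper's proof leaves implicit, but the argument is the same.
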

\begin{proof}
	Let $\evnt{n}{t}$ be the leftmost event in the frontier of a rightmost cut $C$.
	By \autoref{thm:rr of rightmost satcut} this event is a right root.
	Now consider any other event $\evnt{m}{t'} \in \front(C)$ which is \textit{not} a right root, and assume for contradiction that $t'\neq t+\cskew$.
	Then $x_m(t')\geq 0$ and (as in the proof of \autoref{thm:rr of rightmost satcut}) $x_m(t'+\alpha)\geq 0$ for all sufficiently small $\alpha$.
	If $t'<t+\cskew$, then it is possible to add the events $\{\evnt{m}{t'+\alpha} ~|~\alpha \in [0,\gamma)\}$ to $C$, with $\gamma$ small enough, to obtain a satcut to its immediate right, which contradicts $C$ being rightmost.
	On the other hand if $t'>t+\cskew$ this contradicts that $\evnt{n}{t}$ and $\evnt{m}{t'}$ are part of the same frontier.
	Thus $t'=t+\cskew$.
\end{proof}

The next lemma (and its proof) parallels \autoref{thm:rr of rightmost satcut} and \autoref{thm:satcut and roots}, but for leftmost satcuts.
\begin{lemma}
	\label{thm:lr of leftmost satcut}
	The rightmost event of a leftmost satcut is a left root.
	Moreover, every event of the frontier of a leftmost satcut is either a left root or is $(-\cskew$)---offset from a left root.
\end{lemma}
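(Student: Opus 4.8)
The plan is to dualize the proofs of Lemma~\ref{thm:rr of rightmost satcut} and Lemma~\ref{thm:satcut and roots}, replacing the \emph{leftmost} event of the frontier by the \emph{rightmost} event and replacing rightward perturbations by leftward ones. Throughout, let $C$ be a leftmost satcut with frontier $(\evnt{n}{t_n})_n$, and let $\evnt{n}{t}$, with $t=\max_m t_m$, be a rightmost event. Since $C$ is bounded and consistent, every frontier timestamp lies in the window $[t-\cskew,\,t]$; this single observation will discharge all the consistency checks below.

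For the first claim I would argue by contradiction: assume the rightmost event $\evnt{n}{t}$ is not a left root. Since $C$ is a satcut we have $x_n(t)\geq 0$, and (by the sign analysis isolated in the last paragraph) failing to be a left root forces $x_n(t-\delta)\geq 0$ for arbitrarily small $\delta>0$. Granting this, I perturb only the $n$th coordinate, lowering $t$ to $t-\delta$ for such a $\delta\leq\cskew$. Because $\evnt{n}{t}$ is the maximum of the frontier, lowering it by at most $\cskew$ keeps $t-\delta\in[t-\cskew,t]$, so all coordinates still lie in an interval of length $\cskew$ and the perturbed frontier is a consistent cut; it also satisfies the predicate, since $x_n(t-\delta)\geq 0$ and the other coordinates are unchanged. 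As such a satcut exists in $C-\alpha$ for \emph{every} $\alpha>0$, this contradicts $C$ being leftmost. Hence $\evnt{n}{t}$ is a left root.

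For the second claim I would take any other frontier event $\evnt{m}{t'}$ that is not a left root and show $t'=t-\cskew$. Consistency of $C$ together with $t=\max_m t_m$ gives $t-\cskew\leq t'\leq t$, so it suffices to rule out $t'>t-\cskew$. If $t'>t-\cskew$, then (again because $\evnt{m}{t'}$ is not a left root) there are arbitrarily small $\delta>0$ with $x_m(t'-\delta)\geq 0$; choosing $\delta$ small enough that $t'-\delta>t-\cskew$ keeps the perturbed $m$th coordinate inside $[t-\cskew,t]$, so the whole frontier still lies in an interval of length $\cskew$ and remains a consistent, satisfying cut. This once more produces a satcut in $C-\alpha$ for every $\alpha$, contradicting leftmost-ness. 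Therefore $t'=t-\cskew$, i.e. $\evnt{m}{t'}$ is $(-\cskew)$-offset from the left root $\evnt{n}{t}$.

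The main obstacle is the sign-analysis step invoked twice above: that a frontier event which is nonnegative but \emph{not} a left root admits arbitrarily small leftward shifts staying nonnegative. This is where I must use the precise definition of root and left root (\autoref{def:roots}) together with left-limitedness of signals. If no such shifts existed, then $x_n$ would be strictly negative on an entire left-neighborhood of $t$; combined with $x_n(t)\geq 0$, this makes $\evnt{n}{t}$ either a zero (when $x_n(t)=0$) or a sign-changing discontinuity (when $x_n(t)>0$, since the left limit is then nonpositive), hence a root preceded by negatives, i.e. a left root, contradicting the hypothesis. Once this is established, the remaining perturbation and consistency arguments are routine and mirror the rightmost-cut proofs.
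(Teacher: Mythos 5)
Your proposal is correct and follows exactly the route the paper intends: the paper gives no separate argument for this lemma, stating only that it "parallels" Lemmas~\ref{thm:rr of rightmost satcut} and~\ref{thm:satcut and roots}, and your proof is precisely that dualization. You additionally spell out the one genuinely asymmetric step (that a nonnegative frontier event which is not a left root admits arbitrarily small leftward shifts staying nonnegative, via \autoref{def:roots} and left-limitedness), which the paper leaves implicit; this is a correct and welcome amount of extra care.
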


Thus every extremal satcut has a left root or a right root as one its constituent events.
Since there are only finitely many roots in any bounded interval, this gives us the desired conclusion.

\subsection{Proof of \autoref{thm:pvc isomorphic}}
The theorem's proof needs the following lemma, which is of independent interest, and which we prove first.
\begin{lemma}
	\label{thm:pvc hb geq}
	Let $n \neq m$ and $t, t' \neq 0$.
	Then $(\evnt{n}{t} \hb{} \evnt{m}{t'})$ iff $(\pvc{m}{t'}[n] \geq t)$.
\end{lemma}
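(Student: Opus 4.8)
The plan is to collapse this biconditional to a purely arithmetic fact by first pinning down the relevant PVC entry and then characterizing the happened-before relation across agents. Since $n\neq m$, the $n$th coordinate of $\pvc{m}{t'}$ is an off-diagonal entry, so I would read it straight off the closed form of \autoref{thm:pvc assign}: it is $0$ when $t'<\cskew$ and $t'-\cskew$ when $t'\geq\cskew$, i.e.\ $\pvc{m}{t'}[n]=\max(0,\,t'-\cskew)$. Because the hypothesis $t\neq 0$ forces $t>0$, the inequality $\pvc{m}{t'}[n]\geq t$ holds exactly when $t'-\cskew\geq t$ (the case $t'<\cskew$ makes both the closed form and the target gap fail simultaneously, consistent with $t>0$). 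Hence the entire claim reduces to showing, for $n\neq m$, that
$$\evnt{n}{t}\hb\evnt{m}{t'} \iff t+\cskew\leq t'.$$

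For this cross-agent characterization, the $\Leftarrow$ direction is immediate, as it is precisely the $\cskew$-gap clause~\ref{def:hb eps} of \autoref{def:dstr signal}. The substance is the $\Rightarrow$ direction, which asserts that transitivity cannot manufacture a cross-agent ordering with a local-time gap smaller than $\cskew$. Here $\hb$ is the transitive closure of the two base relations of \autoref{def:dstr signal}: intra-agent ordering (clause~\ref{def:hb local}), which strictly increases local time, and the gap relation (clause~\ref{def:hb eps}), where any agent-changing step increases local time by at least $\cskew$. I would argue by telescoping along a chain realizing $\evnt{n}{t}\hb\evnt{m}{t'}$: local time is non-decreasing at every step, and since $n\neq m$ the chain must contain at least one agent-changing step, which contributes a jump of at least $\cskew$; summing the inequalities over the chain yields $t+\cskew\leq t'$. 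Equivalently, and perhaps more cleanly, I could exhibit the closed-form relation $\mathcal R$ defined by ``($n=m$ and $t<t'$) or ($n\neq m$ and $t+\cskew\leq t'$)'', verify by a short case split on the middle event's agent that $\mathcal R$ is transitive and contains both base relations, and note that every pair of $\mathcal R$ is itself a base pair; this sandwiches $\mathcal R$ between the base relations and the transitive closure, forcing equality with $\hb$.

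Chaining the two equivalences then gives $\evnt{n}{t}\hb\evnt{m}{t'}\iff t+\cskew\leq t'\iff \pvc{m}{t'}[n]\geq t$, completing the proof. The one genuinely delicate point, and the step I expect to require the most care, is the $\Rightarrow$ direction of the cross-agent characterization: it is tempting to treat clause~\ref{def:hb eps} as already giving the full relation, but one must rule out shorter chains created by transitivity, and the telescoping (or the closed-form transitivity check) is exactly what guarantees that each agent switch is charged a full $\cskew$. Everything else—reading off the off-diagonal PVC entry and discharging the arithmetic under $t>0$—is routine.
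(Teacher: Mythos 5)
Your argument is correct in substance but takes a genuinely different route from the paper's. The paper proves the lemma directly from the recursive conditions of \autoref{def:pvc}: the forward direction follows from clauses~\ref{def:pvc local} and~\ref{def:pvc hb}, and the backward direction traces the \emph{provenance} of the value at index $n$ --- the only way $\pvc{m}{t'}[n]$ can equal $t$ is by propagation from $\pvc{n}{t}[n]=t$ along a happened-before chain, with the case $\pvc{m}{t'}[n]>t$ handled via local total order and transitivity. You instead substitute the closed form $\pvc{m}{t'}[n]=\max(0,\,t'-\cskew)$ from \autoref{thm:pvc assign} and reduce everything to the arithmetic characterization $\evnt{n}{t}\hb\evnt{m}{t'}\iff t+\cskew\leq t'$ for $n\neq m$; your telescoping (or closed-form transitivity) argument for the $\Rightarrow$ direction is sound, and is in fact a more rigorous justification of a fact the paper's proof of \autoref{thm:pvc assign} only asserts informally (that transitivity manufactures no cross-agent pair with gap below $\cskew$). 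Two caveats you should make explicit. First, your route makes the lemma depend on \autoref{thm:pvc assign}; there is no circularity (its proof does not use this lemma), but that theorem only shows the closed form \emph{satisfies} \autoref{def:pvc}, not that it is the unique assignment doing so, whereas the lemma is stated for the PVC as abstractly defined --- so you need a line arguing uniqueness (the requirement that the maximum in clause~\ref{def:pvc hb} be attained does force the off-diagonal entries down to $t'-\cskew$), or you must restrict the claim to the PVC the algorithm actually computes. Second, you tacitly take $\hb$ to be exactly the transitive closure of clauses~\ref{def:hb local} and~\ref{def:hb eps} of \autoref{def:dstr signal}; the paper makes the same tacit assumption, but \autoref{def:dstr signal} is phrased only as conditions the relation must satisfy (and the remark on message-induced orderings hints it could be larger), so this should be stated. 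In exchange, your approach yields a fully explicit, mechanically checkable argument, while the paper's stays at the level of the abstract definition and so is insensitive to the particular closed form.
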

\begin{proof}
	We split the bidirectional implication into its two directions:
	\begin{enumerate}
		\item $(\evnt{n}{t} \hb{} \evnt{m}{t'}) \implies (\pvc{m}{t'}[n] \geq t)$\\
		Since $\pvc{n}{t}[n] = t$ by \autoref{def:pvc}~\ref{def:pvc local} and $\evnt{n}{t} \hb{} \evnt{m}{t'}$, then by \autoref{def:pvc}~\ref{def:pvc hb}, $\pvc{m}{t'}[n] \geq t$.
		\item $(\evnt{n}{t} \hb{} \evnt{m}{t'}) \impliedby (\pvc{m}{t'}[n] \geq t)$
		\begin{enumerate}
			\item Case $(\pvc{m}{t'}[n] = t) \implies (\evnt{n}{t} \hb{} \evnt{m}{t'})$:\\
			Besides initialization, the only case in \autoref{def:pvc} where a value is assigned which did not come from another timestamp is \autoref{def:pvc}~\ref{def:pvc local}.
			Consider an event $\evnt{n}{t}$.
			The timestamp of this event at index $n$ is $t$, by \autoref{def:pvc}~\ref{def:pvc local}.
			At the point in time when this event is created (local time $t$ on agent $A_n$), no other timestamp has the value $t$ at index $n$.
			All other $\pvc{m}{t'}$ which have the value $t$ at index $n$ must be assigned by \autoref{def:pvc}~\ref{def:pvc hb}.
			This means that they have the relation $\evnt{n}{t} \hb{} \evnt{m}{t'}$, due to the transitive property of the happened-before relation.
			\item Case $(\pvc{m}{t'}[n] > t) \implies (\evnt{n}{t} \hb{} \evnt{m}{t'})$:\\
			Consider a $t''$ where $\pvc{m}{t'}[n] = t''$ and $t'' > t$.
			Then by the previous case, $\evnt{n}{t''} \hb{} \evnt{m}{t'}$.
			Since by the happened-before relation all events on an agent are totally ordered (\autoref{def:dstr signal}~\ref{def:hb local}), $\evnt{n}{t} \hb{} \evnt{n}{t''}$.
			By the transitive property of the happened-before relation (\autoref{def:dstr signal}~\ref{def:hb transitive}), $\evnt{n}{t} \hb{} \evnt{m}{t'}$.
		\end{enumerate}
	\end{enumerate}
\end{proof}

We now proceed with the proof of the theorem.
\begin{proof}
	Since each PVC timestamp corresponds to exactly one event and all events have a timestamp, there is clearly a bijective mapping.
	To show it preserves order, we need to confirm that $(\evnt{n}{t} \hb{} \evnt{m}{t'}) \iff (\pvc{n}{t} < \pvc{m}{t'})$.
	\begin{enumerate}
		\item $\evnt{n}{t} \hb{} \evnt{m}{t'} \implies \pvc{n}{t} < \pvc{m}{t'}$\\
		By \autoref{def:pvc}~\ref{def:pvc hb}, each element of $\pvc{n}{t}$ must be less than or equal to the corresponding element of $\pvc{m}{t'}$.
		So then we need to show that $\pvc{n}{t} \neq \pvc{m}{t'}$.
		\autoref{def:pvc}~\ref{def:pvc local} indicates that $\pvc{m}{t'}[m] = t'$.
		By \autoref{thm:pvc hb geq} if $\pvc{n}{t}[m] = t'$ then $\evnt{m}{t'} \hb{} \evnt{n}{t}$; but there cannot be cycles in the happened-before order relation, then $\pvc{n}{t}[m] < t'$.
		This implies that $\pvc{n}{t} < \pvc{m}{t'}$.
		\item $(\evnt{n}{t} \hb{} \evnt{m}{t'}) \impliedby (\pvc{n}{t} < \pvc{m}{t'})$\\
		$\pvc{n}{t} < \pvc{m}{t'}$ means that $\pvc{n}{t}[i] \leq \pvc{m}{t'}[i],\: \forall i \in [N]$.
		Consider index $n$, where $\pvc{n}{t}[n] \leq \pvc{m}{t'}[n]$.
		By \autoref{def:pvc}~\ref{def:pvc local}, $\pvc{n}{t}[n] = t$, so $\pvc{m}{t'}[n] \geq t$.
		Then \autoref{thm:pvc hb geq} states that this implies $\evnt{n}{t} \hb{} \evnt{m}{t'}$.
	\end{enumerate}
\end{proof}

\subsection{Proof of \autoref{thm:pvc assign}}
\begin{proof}
	Consider \autoref{def:dstr signal}~\ref{def:hb eps}.
	This indicates that all events $\evnt{i}{t - \cskew}$ happened-before $\evnt{n}{t}$, $\forall i \in [N] \setminus \{n\}$.
	Therefore, if these events directly happened-before $\evnt{n}{t}$ (there is no $\evnt{m}{t'}$ where $\evnt{i}{t - \cskew} \hb{} \evnt{m}{t'}$ and $\evnt{m}{t'} \hb{} \evnt{n}{t}$), then this vector is a correct assignment.

	By looking at each point in \autoref{def:dstr signal}, we can see that the only case where one event happened-before another on a different process is when there is at least $\cskew$ difference, \autoref{def:hb eps}.
	While an event may have happened-before $\evnt{n}{t}$ by indirectly following \autoref{def:hb eps} by way of \ref{def:hb local} and \ref{def:hb transitive}, we do not need to consider this event because there is not a direct happened-before relation with $\evnt{n}{t}$ (no event in between).
	Therefore, the assignment $[t - \cskew, \dots, t - \cskew, t, t - \cskew, \dots, t - \cskew]$ is suitable for timestamp $\pvc{n}{t}$.
\end{proof}

\subsection{Proof of \autoref{thm: mod cgnm correctness}}
We start with two lemmas for the leftmost satcut events.
\begin{lemma}
	\label{thm:ls token events}
	For all events $\evnt{n}{t}$ that are left roots, the token $T_n$ incorporates all $\evnt{m}{t - \cskew}$ for all $m \neq n$.
\end{lemma}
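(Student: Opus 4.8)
The plan is to follow the candidate cut maintained by $T_n$ as it chases forbidden events, and to show that the modified left-event rule forces the token to pull in exactly $\evnt{m}{t - \cskew}$ on every other agent. The anchor is a consistency observation: by \autoref{def:ccut} together with \autoref{def:hb eps}, any consistent cut whose frontier on $A_n$ is $\evnt{n}{t}$ must contain every event $\evnt{m}{s}$ with $s \leq t - \cskew$, since each such event happened-before $\evnt{n}{t}$. Hence the frontier on each $A_m$ ($m \neq n$) must sit at local time at least $t - \cskew$, and the \emph{leftmost} consistent cut completing $\evnt{n}{t}$ has frontier exactly $\evnt{m}{t - \cskew}$ there. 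Since $T_n$ is tasked with assembling this leftmost cut, the goal reduces to showing its consistency-repair loop terminates having incorporated each $\evnt{m}{t - \cskew}$.

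First I would establish the single-agent step. Suppose the token's current candidate cut places the $A_m$-frontier strictly before $t - \cskew$ — in particular, right after $T_n$ is reset and holds no $A_m$ information, as in the worked example of \autoref{sec:worked-out example}. Then the cut is inconsistent, since it contains $\evnt{n}{t}$ but omits the happened-before event $\evnt{m}{t - \cskew}$, so $A_m$ supplies a forbidden event and $T_n$ is routed to $\ourslicer_m$. By the modified update rule for left events, $\ourslicer_m$ sets the token's target to $\evnt{m}{t - \cskew}$ rather than to the next abstractor-produced event. If the abstractor never emitted $\evnt{m}{t - \cskew}$, then once $\ourslicer_m$ observes an abstractor event $\evnt{m}{s'}$ with $s' > t - \cskew$ it concludes, by the FIFO property of the channels, that $\evnt{m}{t - \cskew}$ will never arrive, and it creates that event — legitimate since in continuous time every local time carries an event (\autoref{def:dstr signal}). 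The token then incorporates $\evnt{m}{t - \cskew}$ into its frontier.

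Next I would close an induction over the other agents. The candidate cut only ever grows, each forbidden-event step advances one deficient agent's frontier up to $t - \cskew$, and an agent whose frontier has reached $t - \cskew$ is never again a source of consistency violation against $\evnt{n}{t}$, since $|t - (t - \cskew)| = \cskew$. As there are only $N - 1$ other agents, after finitely many steps the candidate cut contains $\evnt{m}{t - \cskew}$ for every $m \neq n$ and is consistent; at that point the repair phase halts, so $T_n$ has incorporated all the claimed events. Termination of each individual step rests on \autoref{ass:basic}.\ref{ass:starvation freedom}, which guarantees that an abstractor event past $t - \cskew$ eventually appears on each agent and triggers the creation step.

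The step I expect to be the main obstacle is verifying that the modified target rule genuinely overrides the default advance on \emph{every} left-event visit, so that the token never settles on a frontier event later than $t - \cskew$ — which the plain right-event branch of the \slicer{} would have permitted — and thereby skips the intended $\cskew$-offset event. Making precise that the left-event branch fires on each routing to a lagging agent, and that the $\cskew$-offset events are incorporated during consistency repair before any satisfaction check could push a frontier past $t - \cskew$, is the crux; this is exactly the behavior the construction of \autoref{sec:ourslicer} is designed to enforce, so the proof amounts to tracing that behavior against the underlying \slicer{} semantics of \cite{chauhan2013distributed}.
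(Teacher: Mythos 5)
Your proposal is correct and follows essentially the same route as the paper's proof: both arguments rest on identifying $\evnt{m}{t-\cskew}$ as the leastmost event on $A_m$ that can complete a consistent cut containing $\evnt{n}{t}$ (the paper reads this off the PVC assignment of \autoref{thm:pvc assign}, you derive it directly from \autoref{def:ccut} and \autoref{def:dstr signal}~\ref{def:hb eps}, which are equivalent by \autoref{thm:pvc isomorphic}), and then appeal to the token's job of assembling that leftmost consistent completion. The extra material you supply --- the operational trace of the left-event target rule, the FIFO-based event creation, and the induction over the $N-1$ other agents for termination --- is detail the paper compresses into a single sentence, not a different argument.
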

\begin{proof}
	For a left root $\evnt{n}{t}$, by \autoref{thm:pvc assign} its PVC is $\pvc{n}{t} = [t - \cskew, \dots, t - \cskew, t, t - \cskew, \dots, t - \cskew]$.
	Since a token $T_n$ is tasked with identifying consistent cuts, for each $m \neq n$ it must incorporate the leastmost event on $A_m$ which can form a consistent cut with $\evnt{n}{t}$.
	The PVC identifies this event as $\evnt{m}{t - \cskew}$.
	Therefore, $T_n$ incorporates all $\evnt{m}{t - \cskew}$ events where $\evnt{n}{t}$ is a left root on $A_n$.
\end{proof}

\begin{lemma}
	\label{thm:ls events}
	The modified slicer processes all events of a leftmost satcut.
\end{lemma}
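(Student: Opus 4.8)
The plan is to combine the structural characterization of leftmost satcuts from Lemma~\ref{thm:lr of leftmost satcut} with the token-incorporation guarantee of Lemma~\ref{thm:ls token events}, handling each frontier event of the leftmost satcut according to its type. Let $L$ be a leftmost satcut with frontier $(\evnt{n}{t_n})_n$, and let $\evnt{n^*}{t^*}$ denote its rightmost event (the one with the largest timestamp). By Lemma~\ref{thm:lr of leftmost satcut}, $\evnt{n^*}{t^*}$ is a left root, and every frontier event $\evnt{m}{t_m}$ is either itself a left root or is $(-\cskew)$-offset from a left root. I would show that each of these two kinds of event is processed by the slicer, where ``processed'' means it is either completed by its own token or incorporated into the cut carried by $T_{n^*}$.

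First I would dispatch the left-root events. The abstractor's first trigger adds \emph{every} detected root---left as well as right---to the local buffer and eventually forwards it to the local slicer (line~\ref{line:abs:add root}), so every left root of $A_m$ lies in $F_m$, even though left roots are never communicated across agents. Since the slicer assigns a token to, and attempts to complete, each event in $F_m$, any frontier event of $L$ that is a left root is processed by its own token $T_m$; in particular the rightmost event $\evnt{n^*}{t^*}$ is completed by $T_{n^*}$.

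Next I would handle the $(-\cskew)$-offset events. These are the frontier coordinates sitting at local time $t^*-\cskew$, and they are precisely the events that the abstractor does \emph{not} produce, since it manufactures $\cskew$-offset events only from \emph{right} roots. This is exactly the gap that Lemma~\ref{thm:ls token events} fills: because $\evnt{n^*}{t^*}$ is a left root, the token $T_{n^*}$ incorporates $\evnt{m}{t^*-\cskew}$ for every $m\neq n^*$ (the slicer $\ourslicer_m$ creating these events on demand, which FIFO delivery makes safe). Hence every $(-\cskew)$-offset frontier coordinate of $L$ is processed by $T_{n^*}$. Combining the two cases yields that every event of $L$ is processed.

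The main obstacle is justifying that the offset events are offset from the \emph{rightmost} left root $\evnt{n^*}{t^*}$ in particular---so that their timestamp is exactly $t^*-\cskew$ and Lemma~\ref{thm:ls token events} applies verbatim---rather than from some other left root on the frontier. I would close this by invoking the explicit leftmost-satcut construction underlying Lemma~\ref{thm:satcut}: there each non-root coordinate is pushed to $b_n=\max(t_n-s_n,\,t_1-s_1-\cskew)$, with $t_1-s_1=t^*$ the unique largest frontier timestamp, so any coordinate in the offset case equals $t^*-\cskew$ and every $b_n\leq t^*$ confirms that $\evnt{n^*}{t^*}$ is indeed the rightmost event. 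A secondary point worth stating explicitly is that a left-root coordinate $\evnt{m}{b_m}$ lying strictly between $t^*-\cskew$ and $t^*$ is genuinely surfaced to $\ourslicer_m$ so that $T_m$ can complete it; this is immediate from the abstractor buffering all roots, but deserves mention precisely because such events are never exchanged between agents.
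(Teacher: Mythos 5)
Your proposal is correct and follows essentially the same route as the paper's proof: it combines Lemma~\ref{thm:lr of leftmost satcut} (every frontier event of a leftmost satcut is a left root or $(-\cskew)$-offset from one) with Lemma~\ref{thm:ls token events} (the token of a left root incorporates the $t-\cskew$ events on the other agents). You are somewhat more explicit than the paper in separating the left-root case (handled because the abstractor buffers and forwards all roots, line~\ref{line:abs:add root}) from the offset case, and in pinning down that the offsets come from the rightmost left root, but these are elaborations of the same argument rather than a different one.
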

\begin{proof}
	By \autoref{thm:lr of leftmost satcut}, all events of a leftmost satcut are either at time $t$ or $t - \cskew$, where $t$ is the time of a left root.
	Since by \autoref{thm:ls token events} every token $T_n$ will visit the $t - \cskew$ event for any left root at $t$ on $A_n$, every $t - \cskew$ will be processed for any left root.
	Thus, all events of a leftmost satcut will be processed.
\end{proof}

We can now proceed with the proof of the theorem.
\begin{proof}
	The abstractor creates discrete events for all roots (Abstractor Line~\ref{line:abs:add root}), as well as $\cskew$-offsets from right roots.
	By \autoref{thm:ls events}, the slicer creates all events of a leftmost satcut.
	This means that all events of leftmost and rightmost satcuts are processed by the slicer.
	Therefore, since the modified slicer returns a lattice of satcuts, the extremal satcuts are included.
\end{proof}

\subsection{Proof of \autoref{thm:time complexity}}
\underline{Time complexity}.
The calculations in our algorithm come from the abstractor, and the modification to the \slicer{}.
Finding a root of a signal $x_n$ takes constant time in the system parameters.
The abstractor has every process send right root info to every other process, for a complexity of $N-1$ per right root, and total complexity of $(N-1)R$ where $R$ is the number of right roots in the system in a given bounded window of time.

Consider slicer $\ourslicer_n$, which is hosting token $T_m$.
The slicer creates a new event, for every target event of $T_m$ that was not produced by the abstractor of $A_m$.
Event creation is $O(N)$ since it requires the creation of a size-$N$ PVC assigned to the event.
Event storage takes constant time if the new event is simply appended at the end of the local buffer, or $O(k)$ if the event is inserted in-order in the sorted local buffer of size $k$.
Either one works: the first one is cheaper, but an unsorted buffer costs more to find events in it.
The latter is more expensive up-front, but the sorted buffer can be searched faster.
Either way, the slicer modification costs a total of $O(N\cdot M)$ in a given bounded window of time with $M$ missed events in the system.

Now the number of target events requiring creation is on the order of the number of right roots since they result from left roots, and there are equal numbers of left and right roots. Thus $M = O(R)$.
Therefore, the total complexity for our algorithm in a given bounded window of time is $O(R(N-1+N))$.
Of course, this is then added to the complexity of running the modified slicer, which is $O(N^2D)$, where $D$ is the number of events in the discrete-time signal. At the most, there are $2R$ events.
So finally the total time complexity  is $O(R(N-1+N)+2N^2R)$, or $O(R(2N+2N^2)/N) = O(2RN)$ per agent.

\underline{Space complexity.} Indeed, a PVC timestamp has size $O(N)$ (since it's an $N$-dimensional vector).
This is in fact the optimal complexity for characterizing causality \cite{charron1991concerning}.
One token stores $N$ PVCs at all times and token updates replace old PVC values by new ones.
Therefore one token has size $O(N^2)$, and all $N$ tokens (one per agent) require $O(N^3)$ space.

How long events stay in the abstractor's local buffers depends on message transmission times, since events are removed from the buffers after the appropriate messages are received (see Algorithm~\ref{algo:abstractor} lines \ref{line:abs:add root},\ref{line:abs:add virtual},\ref{line:abs:remove event}).
It also depends on the distribution of events within the interval of analysis, not just their rate $1/R$.
E.g. if roots are uniformly distributed in the analysis interval, then the $n^{th}$ abstractor's local buffer grows at the most to size $O(N^2)$, as it receives roots from the other $N-1$ agents and stores the $O(N)$ PVC timestamp for each root.
Then event removal starts as $A_n$ receives target events.
Similar considerations apply to the slicer's local buffers.
In such a case the detector's total space complexity is $O(N^3+2N^2)$.
\end{document}